\newcommand{\ord}[1]{\ensuremath{{#1}^{\text{th}}}}
\newcommand{\cH}{\ensuremath{\mathcal{H}}}
\newcommand{\cO}{\ensuremath{\mathcal{O}}}
\newcommand{\cP}{\ensuremath{\mathcal{P}}}
\newcommand{\cT}{\ensuremath{\mathcal{T}}}
\newcommand{\davg}{\ensuremath{\overline{d}}}
\newcommand{\Dyes}{\ensuremath{\mathcal{D}^+}}
\newcommand{\Dno}{\ensuremath{\mathcal{D}^-}}
\newcommand{\Gyes}{\ensuremath{G^+}}
\newcommand{\Gno}{\ensuremath{G^-}}
\newcommand{\calD}{{\cal D}}
\newcommand{\eps}{\varepsilon}
\newcommand{\avgwit}{b}
\newcommand{\ceil}[1]{\ensuremath{\left\lceil #1 \right\rceil}}
\newcommand{\E}{\ensuremath{\mathop{{}\mathbb{E}}}}
\newcommand{\poly}{\mathrm{poly}}
\newcommand{\dest}{\ensuremath{\widetilde{d}}}
\newcommand{\mestin}{\ensuremath{\widehat{m}}} 
\newcommand{\destin}{\ensuremath{\widehat{d}}} 
\newcommand{\outdeg}{\ensuremath{d^+}}
\newcommand{\dbot}{\ensuremath{d^\bot}}
\newcommand{\adj}[1]{\mathsf{Adj}(#1)}
\newcommand{\adjp}[1]{\mathsf{Adj}^\star(#1)}
\newcommand{\aspace}{\texttt{\char32}}%
\newtheorem{theorem}{Theorem}[section]
\newtheorem{claim}[theorem]{Claim}
\newtheorem{lemma}[theorem]{Lemma}
\newtheorem{fact}[theorem]{Fact}
\newtheorem{observation}[theorem]{Observation}
\theoremstyle{definition}
\newtheorem{definition}[theorem]{Definition}
\title{Erasure-Resilient Sublinear-Time Graph Algorithms}
\date{}
\author{Amit Levi\footnote{David R. Cheriton School of Computer Science, University of Waterloo. Email: {\sf amit.levi@uwaterloo.ca}. Part of this work was done while the author was visiting Boston University.} \and Ramesh Krishnan S. Pallavoor\footnote{Department of Computer Science, Boston University. Email: {\sf rameshkp@bu.edu, sofya@bu.edu}. The work of these authors was partially supported by NSF award CCF-1909612 and was done in part while the authors were visiting the Simons Institute for the Theory of Computing.} \and Sofya Raskhodnikova\footnotemark[2] \and Nithin Varma\footnote{Department of Computer Science, University of Haifa. Email: {\sf nvarma@bu.edu}. The work of this author was partially supported by ISF grant 497/17
and Israel PBC Fellowship for Outstanding Postdoctoral Researchers from India and China. This work was done in part while the author was a student at Boston University.}}
\begin{document}
\maketitle

\begin{abstract}
We investigate sublinear-time algorithms that take partially erased graphs represented by adjacency lists as input. Our algorithms make degree and neighbor queries to the input graph and work with a specified fraction of adversarial erasures in adjacency entries. We focus on two computational tasks: testing if a graph is connected or $\eps$-far from connected and estimating the average degree. For testing connectedness, we discover a threshold phenomenon: when the fraction of erasures is less than $\eps$, this property can be tested efficiently (in time independent of the size of the graph); when the fraction of erasures is at least $\eps,$ then
a number of queries linear in the size of the graph representation is required.
Our erasure-resilient algorithm (for the special case with no erasures) is an improvement over the previously known algorithm for connectedness in the standard property testing model and has optimal dependence on the proximity parameter $\eps$.
For estimating the average degree, our results provide an ``interpolation'' between the query complexity for this computational task in the model with no erasures in two different settings: with only degree queries, investigated by Feige (SIAM J. Comput. `06), and with degree queries and neighbor queries, investigated by Goldreich and Ron (Random Struct. Algorithms `08) and Eden et al. (ICALP `17).
We conclude with a discussion of our model and open questions raised by our work.
\end{abstract}

\section{Introduction}\label{sec:introduction}
The goal of this work is to model and investigate sublinear-time algorithms that run on graphs with incomplete information. Typically, sublinear-time models assume that algorithms have query or sample access to an input graph. However, this assumption does not accurately reflect reality in some situations.
Consider, for example, the case of a social network where vertices represent individuals and edges represent friendships.
Individuals might want to hide their friendship relations for privacy reasons. 
When input graphs are represented by their adjacency lists, such missing information can be modeled as \emph{erased} entries in the lists.
In this work, we initiate an investigation of sublinear-time algorithms whose inputs are graphs represented by the adjacency lists with some of the entries adversarially erased.

In our erasure-resilient model of sublinear-time graph algorithms, an algorithm gets a parameter $\alpha \in [0,1]$ and query access to the adjacency lists of
a graph with at most an $\alpha$ fraction of the entries in the adjacency lists erased. We call such a graph
 {\em $\alpha$-erased} or, when $\alpha$ is clear from the context, {\em partially erased}.
Algorithms access partially erased graphs via degree and neighbor queries.
The answer to a degree query $v$ is the degree of the vertex $v$.
A neighbor query is of the form $(v,i)$, and the answer is the $i^{\text{th}}$ entry in the adjacency list of $v$.
If the $i^{\text{th}}$ entry is erased\footnote{One can consider a more general model where the degrees of some vertices can also be erased. Our algorithms continue to work in this model, since one can determine the degree of a vertex using $O(\log n)$ neighbor queries (irrespective of whether these queries are made to erased adjacency entries).}, the answer is a special symbol $\bot$. A {\em completion} of a partially erased graph $G$ is a valid graph represented by adjacency lists (with no erasures) that coincide with the adjacency lists of $G$ on all nonerased entries. We formulate our computational tasks in terms of valid completions of partially erased input graphs and analyze the performance of our erasure-resilient algorithms in the {\em worst case} over all $\alpha$-erased graphs.
%
We investigate representative problems from two fundamental classes of computational tasks in our model: graph property testing and estimating a graph parameter.

In the context of graph property testing~\cite{GGR98}, we study the problem of testing whether a partially erased graph is connected. Our model is a generalization of the \emph{general graph model} of Parnas and Ron~\cite{PR02} (which is in turn a generalization of the \emph{bounded degree model} of Goldreich and Ron~\cite{GR02}) to the setting with erasures. A partially erased graph $G$ has property $\cal P$ (in our case, is connected) if there exists a completion of $G$ that has the property. For $\eps \in (0,1)$, such a graph with $m$ edges (more precisely, $2m$ entries in its adjacency lists) is $\eps$-far from $\cal P$ (in our case, from being connected) if every completion of $G$ is different in at least $\eps m$ edges from every graph with the property. The goal of a testing algorithms is to distinguish, with high probability, $\alpha$-erased graphs that have the property from those that are $\eps$-far.
For testing connectedness in our erasure-resilient model, we discover a threshold phenomenon: when the fraction of erasures is less than $\eps$, this property can be tested efficiently (in time independent of the size of the graph); when the fraction of erasures is at least $\eps,$ then
a number of queries linear in the size of the graph is
required to test connectedness.
Additionally, when there are no erasures,
our tester has better query complexity than the best previously known standard tester for connectedness~\cite{PR02,BRY14}, also mentioned in the book on property testing by Goldreich~\cite{GoldreichBook}. Our tester has optimal dependence on $\eps$, as evidenced by a recent lower bound in~\cite{PRV} for this fundamental property.

Next, we study erasure-resilient algorithms for estimating the average degree of a graph.
The problem of estimating the average degree of a graph, in the case with no erasures, was studied by Feige~\cite{Fei06}, Goldreich and Ron~\cite{GR08}, and Eden et al.~\cite{ERS17, ERS}.
Feige designed an algorithm that, for all $\eps > 0$, makes $O(\sqrt{n}/\eps)$ degree queries to an $n$-node graph and outputs, with high probability, an estimate that is within a factor of $2+\eps$ of the average degree. He also showed that to get a 2-approximation, one needs $\Omega(n)$ degree queries.
Goldreich and Ron proved that if an algorithm can make uniformly random neighbor queries (that is, obtain a uniformly random neighbor of a specified vertex) then, for all $\eps > 0$, the average degree can be estimated to within a factor of $1+\eps$ using $O(\sqrt{n}\cdot\poly(\log n, {1}/{\eps}))$ queries. 
Eden et al.\ proved a tighter bound of $O(\sqrt{n} \cdot \log \log n\cdot\poly({1}/{\eps}))$ on the query complexity of this problem and provided a simpler analysis.
We describe an algorithm that estimates the average degree of $\alpha$-erased graphs to within a factor of $1 +\min(2\alpha,1) +\eps$
using $O(\sqrt{n} \cdot \log \log n\cdot\poly({1}/{\eps}))$ queries.
Our result can be thought of as an interpolation between the results in~\cite{Fei06} and~\cite{GR08,ERS17,ERS}. 
In particular, when there are no erasures, that is, when $\alpha =0$, we get a $(1+\eps)$-approximation; when all adjacency entries are erased, and only the degree queries are useful, that is, when $\alpha=1,$
we obtain a $(2 +\eps)$-approximation. We also show that our result cannot be improved significantly: to get a $(1+\alpha)$-approximation, $\Omega(n)$ queries are necessary.

\paragraph{Discussion of our model.} For the case of graph property testing, our model is an adaptation of the erasure-resilient model for testing properties of functions by
Dixit et~al.~\cite{DRTV18}.
Dixit et~al.\
designed erasure-resilient testers for many properties of functions, including monotonicity, the Lipschitz property, and convexity. The conceptual difference between
the two models
is that the adjacency lists representation of a graph cannot be viewed as a function. (This is not the case for the adjacency matrix representation.) For a function, erased entries can be filled in arbitrarily and, as a result, they never contribute to the distance to the property.
For the adjacency lists representation, this is not the case: erasures have to be filled so that the resulting completion is a valid graph.
The restrictions on how they can be filled
may result in some contribution to the distance coming from the erased entries\footnote{Because of this, we make an adjustment to the model of Dixit et~al.~\cite{DRTV18}: we measure the distance to the property as a fraction of the completion representation that needs to be changed, as opposed to the fraction of the nonerased representation that needs to be changed.}.
For example, consider the property of bipartiteness.
Let $B$ be a complete balanced bipartite graph $(U,V;E)$, and let $B'$ be obtained from $B$ by
adding an
erased entry to the adjacency list of every vertex in $U$.
Then, in every completion of $B'$, all formerly erased entries have to be changed to make the graph bipartite.

Furthermore, Dixit et~al.~\cite{DRTV18} gave results only on property testing in the erasure-resilient model. We go beyond property testing in our exploration of erasure-resilient algorithms by considering more general computational tasks.

Finally, our model opens up many new research directions, some of which are discussed in Section~\ref{sec:open}.

\subsection{The Model}
We consider simple undirected graphs $G=(V,E)$ represented by 
adjacency lists, where some entries in the adjacency lists could be adversarially erased (these entries are denoted by $\bot$).
\begin{definition}[$\alpha$-erased graph; completion]
	Let $\alpha \in [0,1]$ be a parameter.
	An {\em $\alpha$-erased graph} on a vertex set $V$
    is a concatenation of the adjacency lists of a simple undirected graph $(V,E)$
with at most an $\alpha$ fraction of all entries (that is, at most $2\alpha |E|$ entries) in the lists erased.
A \emph{completion} of an $\alpha$-erased graph $G$ is
the adjacency lists representation of a simple undirected graph $G'$ that coincides with $G$ on all nonerased entries.
\end{definition}
%

\noindent By definition, every partially erased graph has a completion, because it was obtained by erasing entries in a valid graph.

Given a partially erased graph $G$ over a vertex set $V$, we use $n$ to denote $|V|$ and $m$ to denote the number of edges in any completion of $G$, that is, half the sum of lengths of the adjacency lists of all the vertices in $G$.
The average degree, that is, $2m/n$, is denoted by $\davg$.
For $u \in V$, we use $\adj{u}$ to denote the adjacency list of $u$.
The degree $u$, denoted $\deg(u)$, is the length of $\adj{u}$.

\begin{definition}[Nonerased and half-erased edges] \label{def:half-erased}
	Let $G$ be a partially erased graph over a vertex set $V$. For vertices $u,v \in V$, the set $\{u,v\}$ is a \emph{nonerased edge} in $G$ if $u$ is present in  $\adj{v}$ and vice versa.
	The set $\{u,v\}$ is a \emph{half-erased edge} if $u$ is in $\adj{v}$ but $v$ is not in $\adj{u}$, or vice versa.
\end{definition}

Our algorithms make two types of  queries: \emph{degree queries} and \emph{neighbor queries}.
A degree query specifies a vertex $v$, and the answer is $\deg(v)$.
A neighbor query specifies $(v,i)$, and the answer is the $\ord{i}$ entry in $\adj{v}$.

\begin{definition}[Distance to a property; erasure-resilient property tester]
	Let $\alpha \in [0,1]$, $\eps \in (0,1)$ be parameters.
	An $\alpha$-erased graph $G$ satisfies a property $\mathcal{P}$ if there exists a completion of $G$ that satisfies $\mathcal{P}$.
	An $\alpha$-erased graph $G$ is $\eps$-far from a property $\mathcal{P}$ if every completion $G'$ of $G$ is different in at least $\eps m$ edges from every graph that satisfies $\mathcal{P}$.

	An $\alpha$-erasure-resilient $\eps$-tester for a property $\mathcal{P}$ gets parameters $\alpha \in [0,1],\eps\in (0,1)$ and query access to an $\alpha$-erased graph $G$. The tester accepts, with probability at least $2/3$, if $G$ satisfies $\mathcal{P}$. The tester rejects, with probability at least $2/3$, if $G$ is $\eps$-far from $\mathcal{P}.$
\end{definition}

\subsection{Our Results}
In this section, we state our main results for the erasure-resilient model of sublinear-time algorithms.
\subsubsection{Testing Connectedness}
The problem of testing connectedness in the {\em general graph model} (that we further generalize to the erasure-resilient setting) was studied by Parnas and Ron~\cite{PR02}. The results on this fundamental problem are described in Section~10.2.1 in~\cite{GoldreichBook}. The best tester for this problem to date, due to \cite{BRY14}, had query complexity $O\big(\frac{1}{(\eps \davg)^2} \big)$.

We give two erasure-resilient testers for connectedness: one for small values of $\alpha$ and another for intermediate values of $\alpha$. Both testers work for all\footnote
{For $\eps \geq 2/\davg$, we have $\eps m\geq n$. Then testing connectedness is trivial, since every graph can be made connected by adding at most $n-1$ edges.}
values of the proximity parameter, $\eps$.
We first give a tester that works for all $\alpha < \eps/2$. (This tester is presented in Section~\ref{sec:optimal-conn-tester}.)

\begin{theorem}\label{thm:connectedness-tester-work-investment}
	There exists an $\alpha$-erasure-resilient $\eps$-tester for connectedness of graphs with the average degree $\davg$ that has $O\big(\min\big\{\frac{1}{((\eps - 2\alpha)\davg)^2}, \frac{1}{\eps-2\alpha} \log \frac{1}{(\eps-2\alpha)\davg}\big\}\big)$ query and time complexity and works for every $\eps \in (0,2/\davg)$ and $\alpha \in [0,\eps/2)$. The tester has 1-sided error.
When the average degree $\davg$ of the input graph is unknown, $\alpha$-erasure-resilient $\eps$-testing of connectedness (with 1-sided error) has query and time complexity $O(\frac{1}{\eps-2\alpha}\log\frac{1}{\eps-2\alpha})$.
\end{theorem}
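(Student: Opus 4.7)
The plan is to reduce the erasure-resilient testing problem to a combinatorial statement about a certain subgraph of $G$ and then apply a BFS-based sampling tester in the style of~\cite{PR02,BRY14}. Define $G'$ to be the graph on $V$ whose edge set consists of all nonerased edges and all half-erased edges of $G$; by the validity of completions, every completion $G^*$ of $G$ is obtained from $G'$ by adding \emph{double-erased} edges (pairs of erased entries matched to each other), so $G'\subseteq G^*$ and $C(G')\ge C(G^*)$ for every completion, where $C(\cdot)$ denotes the number of connected components. If $G$ is $\eps$-far from connected then $C(G^*)\ge \eps m + 1$ for every $G^*$, whence $C(G')\ge \eps m + 1$.

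Next, I would call a component of $G'$ \emph{closed} if no vertex in it has any $\bot$ entry, and \emph{open} otherwise. Each open component contains at least one vertex with a $\bot$ entry, so the number of open components is at most the number of vertices incident to $\bot$, which is at most the total number of erased entries, $2\alpha m$. Consequently $G'$ has at least $(\eps-2\alpha)m+1$ closed components, which is strictly positive by the hypothesis $\alpha<\eps/2$. A key observation is that a closed component of $G'$ consists entirely of nonerased edges (since a half-erased edge requires a $\bot$ at its passive endpoint), so it coincides with a component of the nonerased subgraph $G^-$, and it remains a full component in every completion of $G$ (no erased entry of any of its vertices can participate in a double-erased edge). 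A standard averaging argument using $\sum_C \text{edges}(C) \le m$ then shows that at least half of the closed components contain at most $2/(\eps-2\alpha)$ edges each, so a uniformly random vertex of $G$ lies in such a \emph{small closed} component with probability $\Omega((\eps-2\alpha)\davg)$.

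For the first query-complexity bound, the tester samples $\Theta(1/((\eps-2\alpha)\davg))$ vertices uniformly at random, and from each sampled $v$ runs a BFS restricted to nonerased edges with an appropriately chosen budget, aborting if a $\bot$ entry is discovered in the list of any visited vertex (the component is not closed) or if the budget is exceeded (the component is too large); it rejects iff some BFS completes within budget without ever encountering $\bot$, in which case the explored set is a certified small closed component. One-sided error is immediate, since a connected $G$ admits a completion with a single component and so has no closed component of size less than $n$. Adapting the analysis of~\cite{BRY14} with $\eps$ replaced by $\eps-2\alpha$ yields the $O(1/((\eps-2\alpha)\davg)^2)$ bound. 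The alternative bound $O(\tfrac{1}{\eps-2\alpha}\log\tfrac{1}{(\eps-2\alpha)\davg})$, and the unknown-$\davg$ bound, follow from a Goldreich--Feige work-investment variant that stratifies component sizes into $O(\log(1/((\eps-2\alpha)\davg)))$ dyadic levels and calibrates the sampling rate per level so the contributions sum telescopically to the claimed complexity; the $\min$ is obtained by running both strategies and returning the better answer.

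The main technical obstacle is the structural lemma: specifically, choosing the ``reference graph'' $G'$ correctly so that the two inequalities $C(G')\ge\eps m+1$ and $C_{\text{open}}\le 2\alpha m$ hold simultaneously, and confirming that a closed component of $G'$ is automatically preserved intact in every completion (despite the subtlety that half-erased edges are forced in every completion but necessarily attached to open components). A secondary difficulty is calibrating the BFS budget and the work-investment schedule so that both branches of the $\min$ and the unknown-$\davg$ case emerge from a coherent analysis.
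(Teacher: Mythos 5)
Your structural argument is essentially the paper's: the counting step (at least $\eps m+1$ components in every completion, at most $2\alpha m$ of them touched by erasures, hence at least $(\eps-2\alpha)m$ erasure-free components that survive intact in every completion) matches the paper's notion of a \emph{witness to disconnectedness} and its Claim on their number, and your verification that a ``closed'' component cannot absorb an erased entry from outside is correct. The first term of the min, $O\big(\frac{1}{((\eps-2\alpha)\davg)^2}\big)$, does follow by rerunning the work-investment analysis of Berman et al.\ with $\eps$ replaced by $\eps-2\alpha$, as you say (note, though, that your fixed-sample-size, fixed-budget description would only give $O(b^3)$ with $b=2/((\eps-2\alpha)\davg)$; even the first bound needs the dyadic stratification).

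The genuine gap is the second term of the min, $O\big(\frac{1}{\eps-2\alpha}\log\frac{1}{(\eps-2\alpha)\davg}\big)$, which is the actual novelty of this theorem (it is what beats the previously best tester and gives the optimal $\eps$-dependence). You attribute it to ``a work-investment variant that stratifies component sizes into dyadic levels and calibrates the sampling rate per level so the contributions sum telescopically,'' but that is exactly the BRY14 scheme, and its telescoping sum is $\sum_i (b/2^i)\cdot 2^{2i}=O(b^2)$: you cannot lower the sampling rate at level $i$ below roughly $b/2^i$ and still hit a witness at that level, so no recalibration of rates alone yields the logarithmic bound. The missing idea is to change what is stratified: budget the BFS from a sampled vertex $v$ by \emph{edges}, stopping after $2^{i-1}\cdot\deg(v)$ edges rather than after $2^i$ vertices, and measure the quality of $v$ as $\deg(v)/(2E_{(v)})$ where $E_{(v)}$ is the number of edges of $v$'s component. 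This quality still sums to $1$ over each witness, so the work-investment lemma applies, and the level-$i$ cost becomes $(b/2^i)\cdot 2^i\cdot\E[\deg(v)]=b\,\davg$ per level, i.e.\ $O(b\,\davg\log b)$ total --- but only \emph{in expectation}, so one must also add a Markov-style abort (accept if the query count exceeds a constant times its expectation) to get a worst-case bound. The unknown-$\davg$ claim likewise needs a concrete mechanism (the paper guesses the stratification depth $t$ by doubling under a global budget of $O(\frac{1}{\eps-2\alpha}\log\frac{1}{\eps-2\alpha})$ queries); as written, your sketch does not supply either of these ingredients.
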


Importantly, when the input adjacency lists have no erasures (i.e., when $\alpha=0$), our tester has better query complexity than the previously known best (standard) tester for connectedness, which was due to \cite{BRY14}. We present a standalone algorithm for this important special case in Appendix~\ref{sec:nonerased-connectedness} for easy reference.
By substituting $\alpha=0$ in Theorem~\ref{thm:connectedness-tester-work-investment}, we get $O\big(\min\big\{\frac{1}{(\eps\davg)^2}, \frac{1}{\eps} \log \frac{1}{\eps\davg}\big\}\big)$ query complexity for the case when $\davg$ is known and $O(\frac{1}{\eps}\log\frac{1}{\eps})$ query complexity when $\davg$ is unknown. For the case with no erasures, the improvement in query complexity as a function of $\eps$ is from $O(\frac 1 {\eps^2})$ to  $O(\frac 1 {\eps} \log \frac 1 {\eps})$. The latter is optimal, as evidenced by an $\Omega(\frac 1 {\eps} \log \frac 1 {\eps})$ lower bound for testing connectedness of graphs of degree 2 in~\cite{PRV}.
We note that Berman et al.~\cite{BRY14}
already proved that testing connectedness of graphs (with no erasures) in the bounded degree graph model of \cite{GR02} has query complexity
$O(\frac{1}{\eps} \log \frac{1}{\eps D})$
where $D$ denotes the degree bound.
Our result shows that the same query complexity (with $D$ replaced by $\davg$) is attainable in the general graph model.

Our first tester looks for small connected components that do not have any erasures.
When $\alpha\in[\eps/2,\eps)$, some $\alpha$-erased graphs that are $\eps$-far from connected may not have any connected component that is free of erasures. Consequently,
our first tester
fails to reject such graphs. We give a different algorithm (presented in Section~\ref{sec:gen-conn-tester}) which works by looking for a subset of vertices that has at most one erasure and gets completed to a unique connected component in every completion of the partially erased graph.
(In the beginning of Section~\ref{sec:gen-conn-tester}, we give an explanation, illustrated by Figure~\ref{fig:2erasures}, of why two erasures in a witness may render it not detectable from a local view obtained by a sublinear algorithm.)

\begin{theorem}\label{thm:connectedness-tester}
	There exists an $\alpha$-erasure-resilient $\eps$-tester for connectedness of graphs with the average degree $\davg$ that has $O\big(\frac{1}{(\eps - \alpha)^2 \cdot \davg} \cdot \min\big\{\frac{1}{(\eps-\alpha) \cdot \davg^2},1 \big\} \big)$ query and time complexity and works for every $\eps \in (0,2/\davg)$ and $\alpha \in [0,\eps)$. The tester has 1-sided error.
\end{theorem}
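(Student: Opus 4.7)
The plan is to extend the approach of Theorem~\ref{thm:connectedness-tester-work-investment} to the regime $\alpha\in[\eps/2,\eps)$. Since in this regime an $\eps$-far graph may contain no entirely erasure-free small connected component, the tester searches for a broader class of small witness subsets: subsets $S$ that, despite containing at most one erasure in their adjacency lists, are certifiably connected components of $G$ in every completion.

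I would define a witness as a subset $S\subseteq V$ of size at most $k=O(1/((\eps-\alpha)\davg))$ such that (i) every non-$\bot$ entry of $\adj{v}$ for $v\in S$ lies in $S$; (ii) the adjacency lists of $S$ together contain at most one $\bot$; and (iii) if this $\bot$ is present, it corresponds to a half-erased edge $\{v,w\}$ with $w\in S$, identified via the asymmetric listing $v\in\adj{w}$, $w\notin\adj{v}$. Under these conditions $S$ is a connected component in every completion of $G$: no non-$\bot$ edge leaves $S$ by (i); any external half-erased edge would introduce an additional $\bot$ in $S$, contradicting (ii); and no free $\bot$ exists in $S$ by (iii). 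Two erasures in the witness would cause ambiguity in locally verifying internal pairing, which is why the cap of one erasure is necessary.

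The key structural lemma is: if $G$ is $\alpha$-erased and $\eps$-far from connected, then the total degree of vertices in witnesses is $\Omega((\eps-\alpha)m)$. To prove it, I would analyze the subgraph $H^+$ of nonerased and half-erased edges, which are forced in every completion. Let $c_0$ be the number of components of $H^+$, $F\leq 2\alpha m$ the total number of free $\bot$s (not claimed by any half-erased edge), and $k\leq F$ the number of components containing at least one free $\bot$. An adversarial completion merges components of $H^+$ by pairing free $\bot$s, achieving at most $M=\min(\lfloor F/2\rfloor,k-1)$ mergings. The $\eps$-far condition $c_0-M\geq\eps m+1$ combined with $k-F/2\leq F/2\leq\alpha m$ yields that at least $(\eps-\alpha)m$ components of $H^+$ have no free $\bot$ and are hence components in every completion. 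A Markov-type averaging then extracts $\Omega((\eps-\alpha)m)$ such small components with total degree $O(1/(\eps-\alpha))$; a further counting argument over the at most $2\alpha m$ internal half-erased edges shows that a sufficient fraction qualify as witnesses under condition (ii).

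The algorithm samples a vertex $v$ uniformly at random and performs a bounded BFS from $v$ along non-$\bot$ entries, tracking each $\bot$ encountered and certifying it as internal via the asymmetric-listing check among already-explored vertices. The BFS halts when $|S|$ exceeds $k$ or more than one $\bot$ in $S$ remains uncertified; the tester rejects iff BFS terminates within the budget with a valid witness. For one-sided error, any rejection certificate $S\subsetneq V$ is a connected component in every completion, so if $G$ has a connected completion no such certificate can exist. Repeating the sampling $O(1/((\eps-\alpha)\davg))$ times at BFS cost $O(1/(\eps-\alpha))$ per trial yields the main factor of the query complexity; the dense-regime improvement $\min\{1/((\eps-\alpha)\davg^2),1\}$ follows from a more careful analysis of the sample-vs-BFS trade-off when $\davg$ is large. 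The main obstacle is the final counting step of the structural lemma: bounding the number of $b=0$ components satisfying the one-erasure cap requires a careful redistribution argument across $H^+$ components, since a naive use of Markov over the $2\alpha m$ total erasures would only yield $\Omega((\eps-2\alpha)m)$ witnesses, which is vacuous for $\alpha\geq\eps/2$; the refined argument must exploit the fact that components containing many internal half-erased edges can be discarded or charged to the $\bot$-budget in a way that preserves the $\Omega((\eps-\alpha)m)$ total witness mass.
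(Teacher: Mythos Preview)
Your witness definition and the algorithm are essentially the paper's: the paper calls your $S$ a \emph{generalized witness to disconnectedness}, and its tester likewise samples $O(b)$ vertices with $b=\Theta(1/((\eps-\alpha)\davg))$ and runs a budgeted BFS of cost $\min\{b^2,b\davg\}$ per sample. One point you should make explicit is that a generalized witness may have only a \emph{single} vertex from which the BFS succeeds (the endpoint $w$ of the half-erased edge satisfying $v\in\adj{w}$); the paper states this and it is what justifies the $1/b$ success probability per sample.

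Where your proposal diverges, and where the gap lies, is the structural lemma. Your route through the forced subgraph $H^+$ and the free-$\bot$ accounting is a detour that manufactures exactly the obstacle you flag at the end. The paper's argument is direct and avoids it entirely: fix an \emph{arbitrary completion} $G'$ and work with its connected components. There are at least $\eps m+1$ of them; at most $(\eps-\alpha)m/2$ are $(\eps-\alpha)$-big (by the vertex-count or representation-length bound), so at least $(\eps+\alpha)m/2$ are small. Now simply observe that any component of $G'$ whose adjacency lists contain at least two erasures consumes at least two of the $2\alpha m$ erased entries, so there are at most $\alpha m$ such components. Subtracting, at least $(\eps+\alpha)m/2-\alpha m=(\eps-\alpha)m/2$ small components of $G'$ have at most one erasure. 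Finally, a component $C$ of a \emph{completion} with exactly one erasure automatically satisfies your condition (iii): the single $\bot$ in $\adj{u}$ must have been filled by some $v\in C$, and since there is only one erasure, $u$ appears nonerased in $\adj{v}$---so it is a half-erased edge internal to $C$, and every vertex of $C$ is reachable from $v$.

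The reason your $H^+$ decomposition loses the factor is that you discard twice: once for components with a free $\bot$ (cost up to $\alpha m$) and again for components with $\ge 2$ half-erased edges (cost up to $\alpha m$), landing at $(\eps-2\alpha)m$. Working in a fixed completion collapses these into a single ``$\ge 2$ total erasures'' discard of cost $\alpha m$, which is precisely what gets you past the $\alpha\ge\eps/2$ threshold without any redistribution argument.
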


Finally, we show that when $\alpha \geq \eps$, the task of $\alpha$-erasure-resilient $\eps$-testing of connectedness requires examining a linear portion of the graph representation. That is, we discover a phase transition in the complexity of this problem when the fraction of erasures $\alpha$ reaches the proximity parameter $\eps$.

\begin{theorem} \label{thm:connectedness-lb}
	For all $\eps \in (0,1/7],$ every $\eps$-erasure-resilient $\eps$-tester for connectedness  that makes only degree and neighbor queries requires
a number of queries linear in the size of the graph representation.
\end{theorem}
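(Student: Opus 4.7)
The plan is to apply Yao's minimax principle by exhibiting two distributions $\Dyes$ and $\Dno$ over $\eps$-erased graphs on $n$ vertices with $m = \Theta(n)$ edges such that every graph in the support of $\Dyes$ has a connected completion, every graph in the support of $\Dno$ is $\eps$-far from connected, and no deterministic tester making $o(n)$ queries distinguishes the two distributions with success probability greater than $2/3$. Since in this regime the graph representation has size $\Theta(n+m) = \Theta(n)$, this gives the desired linear lower bound against any randomized tester.

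For the construction, I would partition the $n$ vertices into $k = \Theta(\eps n)$ groups of bounded size $s = \Theta(1/\eps)$ so that $k$ disjoint copies of a small connected gadget on $s$ vertices form a graph that is $\eps$-far from connected (since $k-1 \geq \eps m$). On each group the non-erased portion is a fixed copy of the gadget. Additionally, each group has a bounded number of \emph{half-erased} ``outer'' edges, each having one side that is a $\bot$ entry and another side that is a specific non-erased partner vertex. In $\Dno$ the partner lies in the same group, so the forced (unique) completion keeps all $k$ groups as separate connected components; in $\Dyes$ the partners are selected from other groups so that the forced completion includes cross-group edges spanning a single connected component. Both distributions respect the erasure budget of at most $2\eps m$ entries. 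A uniformly random permutation of the vertex labels is then applied, so that from the tester's viewpoint the identity of each pinned partner vertex is random.

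The indistinguishability proof rests on a coupling based on the principle of deferred decisions. Under the random labeling, the marginal distribution of the answer to any single query is identical in $\Dyes$ and $\Dno$, so any distinguishing power must come from ``coincidences''---events in which a newly revealed partner label matches a vertex that earlier queries have already placed into a specific group via local BFS. I plan to argue that, conditional on the tester's history of answers, the unexposed portion of the random permutation stays uniformly distributed over the unseen vertex labels, so each new query produces a coincidence with conditional probability only $O(1/n)$. A union bound over $q$ queries then yields $O(q/n)$ expected coincidences and an $O(q/n)$ bound on the distinguishing advantage; requiring the advantage to be at least $1/3$ forces $q = \Omega(n) = \Omega(m)$.

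The main obstacle will be making the $O(1/n)$ per-query coincidence bound hold against fully adaptive testers, in particular strategies that try to ``chase'' partner labels by following them from one vertex to another. I would handle this by maintaining a careful invariant: at every step of the tester's execution, the unexposed portion of the random permutation remains uniformly distributed over the vertex labels that have not yet appeared in any query answer, so even when the tester adapts its next query to its history, the label revealed by the new query cannot be biased toward any specific target vertex beyond the $O(1/n)$ rate per query.
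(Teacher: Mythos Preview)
Your overall framework---Yao's principle, two distributions over $\eps$-erased graphs built from $k = \Theta(\eps n)$ groups of size $\Theta(1/\eps)$, with a random relabeling of the vertices---is the same as the paper's. The gap is in the gadget construction and, consequently, in the claimed per-query $O(1/n)$ coincidence bound.

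In your construction, each half-erased outer edge has a non-erased side sitting on some vertex $v$ inside a group, and that entry contains the label of the partner $u$; in $\Dno$ this $u$ lies in the \emph{same} group, in $\Dyes$ in a different one. But then a tester can distinguish in $O(1/\eps)$ queries: sample a vertex, BFS through the connected gadget to learn all $s$ labels in that group, locate the vertex $v$ carrying the non-erased outer entry, and read off the partner label $w$. Under $\Dno$, with probability $1$ the label $w$ is one of the $s$ labels just exposed; under $\Dyes$, it is a fresh label from another group. Your invariant that ``the unexposed portion of the random permutation stays uniformly distributed over the unseen vertex labels'' breaks exactly here: the partner relation is fixed at the level of \emph{positions} in the template, so once the BFS has exposed the labels of both positions in the same group, the answer to the partner query is already determined---it is a seen label, not a uniform unseen one. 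The adaptive attack you worry about (``chasing'' partner labels) is not the obstacle; the obstacle is that in $\Dno$ the coincidence is forced, not rare.

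The paper's construction sidesteps this by making the two distributions differ at a \emph{single} vertex $v^\star$ rather than symmetrically at every group. The $k$ groups are identical $t$-cycles in both $\Gyes$ and $\Gno$, each with one degree-$3$ vertex whose extra entry is $\bot$. All the non-erased sides of the half-erased edges are concentrated on $v^\star$: in $\Gyes$, $\adj{v^\star}$ lists the $k$ degree-$3$ vertices; in $\Gno$, $v^\star$ is isolated. Because the cycle vertices only ever return cycle neighbors or $\bot$, the label of $v^\star$ is never returned as an answer to a neighbor query in either distribution. Hence the only way for the tester to learn anything about $v^\star$ is to query an as-yet-unknown label that happens to be $v^\star$, which occurs with probability at most $1/(n-2q)$ per query; a union bound over $q \le n/6$ queries finishes the argument. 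No coupling or deferred-decisions machinery is needed.
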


To prove this theorem, we construct (in Section~\ref{sec:connectedness-lb}) a family of partially erased graphs for which it is hard to distinguish connected graphs from graphs that are far from connected. The average degree of the graphs in our constructions is constant. So, the lower bound for this graph family is $\Omega(n)=\Omega(m)$.


\subsubsection{Estimating the Average Degree}\label{sec:our-results-av-deg}

In Section~\ref{sec:alg-av-degree}, we give an erasure-resilient algorithm for estimating the average degree by generalizing the algorithm of
Eden et al.~\cite{ERS17, ERS}
to work for the case with erasures.
\begin{theorem}\label{thm:degree-ub}
	Let $\alpha \in [0,1]$ and $\eps \in (0, 1/2)$. There exists an algorithm that makes $O(\sqrt{n} \cdot \log\log n \cdot \poly(1/\eps))$ degree queries and uniformly random neighbor queries to an $\alpha$-erased input graph of average degree $\davg\geq 1$ and outputs, with probability at least $2/3$, an estimate $\dest$ satisfying
$(1-\eps) \cdot \davg < \dest < (1 + 2 \min(\alpha, \frac{1}{2}) + \eps) \cdot \davg$.
The running time of the algorithm is the same as its query complexity.
\end{theorem}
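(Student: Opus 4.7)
The plan is to generalize the average-degree estimator of Eden, Ron, and Seshadhri (hereafter ERS) to partially erased graphs via a single local modification. Fix a total ordering $\prec$ on the vertices consistent with their degrees (with ties broken arbitrarily), and for each edge $\{u,v\}$ with $u \succ v$ say the edge is \emph{charged to} its lower-ranked endpoint $v$. For any vertex $v$, let $\rho(v)$ be the fraction of entries in $\adj{v}$ pointing to a vertex ranked higher than $v$. Since each edge is charged to exactly one endpoint, $\sum_v \rho(v)\deg(v) = m$ and therefore $\E[2\rho(V)\deg(V)] = \davg$ for a uniformly random vertex $V$. The ERS algorithm exploits this identity: sample $|S| = O(\sqrt{n}\log\log n\cdot\poly(1/\eps))$ vertices uniformly, compute $\rho(V)$ exactly for ``light'' $V$ by reading all of $\adj{V}$, and estimate it via random neighbor queries for ``heavy'' $V$, then output $\hat{d} = \frac{2}{|S|}\sum_{V\in S}\hat\rho(V)\deg(V)$.

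My only modification for the erasure-resilient version is to redefine $\hat\rho(v)\deg(v)$ as the number of entries in $\adj{v}$ that are either $\bot$ or point to a vertex ranked higher than $v$. For light $v$ this is computed exactly; for heavy $v$ it is estimated via random neighbor queries, treating each $\bot$ response as if the queried entry pointed to a higher-ranked vertex. All other algorithmic steps (degree queries and uniform vertex sampling) work unchanged since degrees equal adjacency-list lengths in our model.

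For the expectation, let $b(v)$ be the number of $\bot$ entries in $\adj{v}$ and $h^+(v)$ be the number of non-erased entries in $\adj{v}$ pointing to a higher-ranked vertex. A counting argument that splits edges by which side is erased gives $\sum_v (h^+(v) + b(v)) = m + k^+$, where $k^+$ is the total number of erased entries appearing on the \emph{higher-ranked} side of their edge. Hence $\E[\hat{d}] = \davg + \frac{2k^+}{n}$. Bounding $k^+ \le m$ (each edge has at most one higher-ranked side) and $k^+ \le 2\alpha m$ (total erasures are at most $2\alpha m$) yields $\E[\hat{d}] \le (1 + 2\min(\alpha, \tfrac{1}{2}))\davg$, while $\E[\hat{d}] \ge \davg$ is immediate from $k^+ \ge 0$.

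Concentration of $\hat{d}$ within a $(1\pm\eps)$ factor around $\E[\hat{d}]$ then follows the ERS bucket-by-degree variance analysis essentially verbatim: the $\bot$-as-higher rule is a deterministic function of the visible adjacency list, it does not increase the variance of the heavy-bucket per-vertex estimator, and the key boundedness $\hat\rho(v)\deg(v) \le \deg(v)$ is preserved. Applying Chebyshev's inequality with a union bound over the $O(\log n)$ degree buckets gives the claimed sample complexity. Combining with the expectation bounds yields $(1-\eps)\davg < \hat{d} < (1 + 2\min(\alpha,\tfrac{1}{2}) + \eps)\davg$. The main obstacle is verifying that the delicate bucket-wise variance argument of ERS still applies under the modification; the saving grace is that the modification is a deterministic replacement of outcomes in the inner sampling, so all boundedness and moment hypotheses used by the ERS analysis carry through.
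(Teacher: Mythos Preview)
Your core modification—treating each $\bot$ answer as if it pointed to a higher-ranked vertex—is exactly what the paper does, and your expectation computation is correct (indeed, your identity $\sum_v(h^+(v)+b(v))=m+k^+$ is a slightly cleaner accounting than the paper's separate bounds on $\sum_v\outdeg(v)$ and $\sum_v\dbot(v)$, though both yield the same $(1+2\min(\alpha,\tfrac12))$ upper bound).

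Where you diverge is in the algorithmic scaffolding around this idea. The paper does \emph{not} bucket by degree, read full adjacency lists for light vertices, or use Chebyshev. Instead, it samples a vertex $u$, makes a \emph{single} random neighbor query, and sets $\chi_i=\deg(u)$ iff $u$ is below a fixed degree threshold $4\sqrt{n\destin/\eps}$ and the sampled entry is $\bot$ or higher-ranked. Concentration is via Hoeffding (the $\chi_i$ are bounded by the threshold), and the whole thing is wrapped in an outer geometric search over crude estimates $\destin=n/2^i$ with $O(\log\log n)$-fold median amplification per level; this outer loop is where the $\log\log n$ factor actually comes from. Because the threshold discards high-degree vertices entirely, the paper's lower bound on the expectation is only $(1-\eps/2)\davg$, not $\davg$ exactly as you claim.

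Your version is not wrong in spirit, but two points need attention. First, ``read all of $\adj{V}$ for light $V$'' is not what ERS does and, depending on your light/heavy threshold, could cost $\Theta(\sqrt{n})$ queries per sample and blow the budget; the actual ERS (and the paper) make $O(1)$ queries per sampled vertex. Second, since you never introduce a crude estimate or an outer search, you should say where your $\log\log n$ comes from and why your sample size suffices without knowing $\davg$ in advance—this is precisely the role of the paper's Algorithm~3.2. If you instead intend the Goldreich--Ron style bucket-by-degree analysis, that can be made to work, but it too relies on a geometric search for $m$ and uses one random neighbor per sample, not full list scans.
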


For graphs with no erasures, a good estimate of the number of edges gives a good estimate of the average degree.
Feige's algorithm~\cite{Fei06} (that has access only to degree queries) counts some edges twice and gets an estimate of the average degree that is within a factor of $2+\eps$.
Goldreich and Ron~\cite{GR08} and Eden et al.~\cite{ERS17,ERS} avoid the issue of double-counting by
ranking vertices according to their degrees
and estimating, within a factor of $1+\eps$, the number of edges going from lower-ranked to higher-ranked vertices.
These algorithms use degree queries and uniformly random neighbor queries.
Having erasures in the adjacency lists is, in a rough sense, equivalent to not having access to \emph{some} of the neighbor queries.
This results in the additional $2\alpha$ error term in the approximation guarantee.
Consequently, when the fraction of erasures approaches 1/2, all the ``relevant'' entries in the adjacency lists of the input graph could be erased, and we enter the regime of having access only to degree queries.

In Section~\ref{sec:lb-av-degree}, we show that, for any fraction $\alpha\in (0,1]$, estimating the average degree of an $\alpha$-erased graph to within a factor of $(1+\alpha)$ requires $\Omega(n)$ queries.
In other words, the approximation ratio of our erasure-resilient algorithm  for estimating the average degree  cannot be improved significantly.

\begin{theorem} \label{thm:degree-lb}
Let $\alpha \in (0,1]$ be rational. For all $\gamma < \alpha$, at least $\Omega(n)$ queries are necessary for every algorithm that makes degree 
and neighbor queries to an $\alpha$-erased 
graph with the average degree $\davg$ and outputs, with probability at least 2/3, an estimate $\dest \in \left[\davg, (1 + \gamma) \davg \right]$.
\end{theorem}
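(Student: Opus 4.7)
The plan is to apply Yao's minimax principle. It suffices to exhibit two distributions $\Dyes$ and $\Dno$ over $\alpha$-erased graphs on $n$ vertices whose average degrees differ by a factor strictly greater than $1+\gamma$, such that no deterministic algorithm making $o(n)$ queries distinguishes them with success probability better than $2/3$. Any algorithm satisfying the hypothesis of the theorem would, by outputting $\dest \in [\davg,(1+\gamma)\davg]$, implicitly distinguish the two distributions, since for $\gamma<\alpha$ the required output intervals $[\davg(\Dyes),(1+\gamma)\davg(\Dyes)]$ and $[\davg(\Dno),(1+\gamma)\davg(\Dno)]$ are disjoint.

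Write $\alpha=p/q$ in lowest terms, fix a constant integer $d$ divisible by $q$, and set $t:=\alpha d$, a positive integer. I would define $\Dyes$ by sampling a uniformly random $d$-regular graph on $[n]$ and, in each vertex's adjacency list, erasing a uniformly random subset of $t$ of the $d$ positions. I would define $\Dno$ by first picking a uniformly random subset $S\subseteq [n]$ of size $t$, sampling a uniformly random $(d-t)$-regular graph on the visible set $V\setminus S$, and joining every visible vertex to every vertex of $S$; in each visible vertex's list the $t$ entries pointing into $S$ are erased and placed in a uniformly random subset of its $d$ positions. A direct count gives $\davg(\Dyes)=d$ and $\davg(\Dno)=(n-t)(d+t)/n=(1+\alpha)d-O(1/n)$, so the ratio exceeds $1+\gamma$ for $n$ sufficiently large (using $\gamma<\alpha$). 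The overall erasure fraction in $\Dno$ equals $t/(d+t)=\alpha/(1+\alpha)<\alpha$, and every visible vertex has exactly $t$ randomly positioned erased entries in both distributions, so the per-query $\bot$-rate matches exactly.

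For indistinguishability, I would argue that an adaptive $q$-query algorithm learns of $S$ only by directly querying a vertex of $S$: neighbor queries from visible vertices to $S$-pointers return $\bot$ by construction. Since $S$ is a uniformly random $t$-subset of $[n]$ and $t$ is constant, a union bound gives probability at most $qt/n=o(1)$ of ever touching $S$ when $q=o(n)$. Conditioned on this non-event, the visible-vertex transcript has the same distribution in both $\Dyes$ and $\Dno$ up to $o(1)$ total variation: a visible vertex has degree $d$, exactly $t$ randomly positioned $\bot$ entries, and $d-t$ non-$\bot$ entries pointing to visible vertices (deterministically in $\Dno$, and with probability $1-O(1/n)$ per query in $\Dyes$, since a random $d$-regular graph places only $td/(n-1)$ of any vertex's edges into $S$ on average). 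The principal obstacle is verifying the coupling rigorously at the level of joint distributions, where random-regular-graph correlations could in principle leak information across queries; this is handled via standard configuration-model or local-weak-convergence arguments that give $o(1)$ total variation uniformly over $q$-query transcripts. Yao's principle then yields $q=\Omega(n)$, establishing the lower bound.
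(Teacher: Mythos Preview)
Your construction has a genuine flaw: the two distributions $\Dyes$ and $\Dno$ can be told apart with a \emph{constant} number of queries, so no $\Omega(n)$ lower bound follows from them.

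The problem is the correlation between the two sides of an edge. In $\Dno$, every erased entry in a visible vertex's list points into $S$; hence every edge between two visible vertices is non-erased on \emph{both} sides. So if a visible vertex $v$ reveals a non-erased neighbor $u$, then $v$ necessarily appears among the non-erased entries of $\adj{u}$. In $\Dyes$, by contrast, the $t$ erased positions in each list are chosen independently across vertices; if $v$ reveals $u$ as a non-erased neighbor, the copy of $v$ in $\adj{u}$ is still erased with probability $t/d=\alpha$. This gives a $2d$-query distinguisher: pick a vertex $v$, read all of $\adj{v}$, pick any non-erased neighbor $u$, read all of $\adj{u}$, and check whether $v$ appears among the non-erased entries of $\adj{u}$. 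Under $\Dno$ the answer is ``yes'' with probability $1-O(1/n)$; under $\Dyes$ it is ``yes'' with probability $1-\alpha$. Since $d$ and $\alpha$ are fixed constants, this is a constant-query test with constant advantage. Your single-vertex marginal analysis (``a visible vertex has degree $d$, exactly $t$ randomly positioned $\bot$ entries, and $d-t$ non-$\bot$ entries\ldots'') is correct as far as it goes, but it misses this two-vertex correlation; no configuration-model or local-weak-convergence coupling can repair it, because the discrepancy is in the local erasure pattern itself, not in the random-graph structure.

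The paper sidesteps this with a much simpler pair of explicit (non-random) graphs that differ only at a single vertex $v^\star$: a cycle on $(1-\lambda)(n-1)$ vertices together with $\lambda(n-1)$ degree-$1$ vertices whose sole adjacency entry is erased, where in one graph $\adj{v^\star}$ lists all the degree-$1$ vertices (so the completion is cycle $+$ star) and in the other $v^\star$ is isolated (so the completion is cycle $+$ matching). Here the reverse-edge test above is useless---cycle edges are non-erased on both sides in both graphs, and degree-$1$ vertices have no non-erased neighbor to follow---so the only way to distinguish is to hit $v^\star$ under a random labelling, which takes $\Omega(n)$ queries.
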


\subsection{Research Directions and Further Observations}
There are numerous research questions that arise from our work. In Section~\ref{sec:open}, we discuss some of them and also give additional observations about variants of our model. We mention open questions about another (weaker) threshold in erasure-resilient testing of connectedness, about erasure-resilient testing of monotone graph properties, about the relationship between testing with erasures and testing with errors, and about the variant of our model that allows only symmetric erasures.
We show that some of the questions we discuss are open in our model, but easy in the bounded-degree version of our model.

\subsection{Related Work}

Erasure-resilient sublinear-time algorithms, in the context of testing properties of functions, were first investigated by Dixit et al.~\cite{DRTV18}, and further studied by Raskhodnikova et al.~\cite{RRV19}, Pallavoor et al.~\cite{PRW20}, and Ben-Eliezer et al.~\cite{BFLR20}.

Property testing in the general graph model was first studied by Parnas and Ron~\cite{PR02}, who considered a relaxed version of the problem of testing whether the input graph has small diameter.
Kaufman et al.~\cite{KKR04} studied the problem of testing bipartiteness in the general graph model and obtained tight upper and lower bounds on its complexity.

Sublinear-time algorithms for estimating various graph parameters have also received significant attention.
There are sublinear-time algorithms for estimating the weight of a minimum weight spanning tree~\cite{CRT05}, the number of connected components~\cite{CRT05,BKM14}, the average degree~\cite{Fei06,GR08}, the average pairwise distance~\cite{GR08}, moments of the degree distribution~\cite{GRS11,ERS17},
and subgraph counts~\cite{GRS11,ELRS17,ERS18,ER18,ABGP18,AKK19}.

\section{Erasure-Resilient Testing of Connectedness}
\label{sec:er-connectedness}
In this section, we present our results on erasure-resilient testing of connectedness in graphs.

\subsection{\texorpdfstring{An Erasure-Resilient Connectedness Tester for $\alpha<\eps/2$}{An Erasure-Resilient Connectedness Tester for small alpha}}\label{sec:optimal-conn-tester}
In this section, we present our connectedness tester for small $\alpha$ and prove Theorem~\ref{thm:connectedness-tester-work-investment}.
The tester looks for witnesses to 
disconnectedness in the form of connected components with no erasures. It repeatedly performs a breadth first search (BFS) from a random vertex until it finds a witness to disconnectedness or exceeds a specified query budget.

A simple counting argument shows that if a partially erased graph is far from connected then it has many small witnesses to disconnectedness. Moreover, the size of the average witness among them is at most some bound $b$ (that we calculate later). Our tester uses BFS to detect a witness to 
disconnectedness of size at most $b$.

The best tester for connectedness to date, by Berman et al.~\cite{BRY14}, uses a technique called the {\em work investment strategy}.
Specifically, their algorithm repeatedly samples a uniformly random vertex $v$, guesses the size of the witness to disconnectedness $C_{(v)}$ containing $v$, and then performs a BFS from $v$ for $|C_{(v)}|^2$ queries.
Clearly, $|C_{(v)}|^2$ queries are enough to detect $C_{(v)}$.
Using the fact that the expected size of a witness is $b$, they argue that their algorithm has complexity $O(b^2)$.

The new idea in our connectedness tester is to perform the BFS from a uniformly random vertex $v$ for $|C_{(v)}|\cdot \deg(v)/2$ queries.
The expected value of the latter quantity is bounded by $E_{(v)}$, where $E_{(v)}$ denotes the number of edges in the witness containing $v$, and the expectation is over the choice of a uniformly random vertex from $C_{(v)}$.
That is, in expectation, the number of queries that we \emph{invest} into the BFS from $v$ is enough to detect $C_{(v)}$.
We show that, overall, the expected complexity of this algorithm is $\widetilde{O}(b\cdot \davg)$, which is smaller than $O(\avgwit^2)$ when $\avgwit > \davg$.

Our erasure-resilient tester is Algorithm~\ref{alg:connectedness-work-investment}, with a small standard modification to ensure that the stated complexity bounds hold in the worst case (not just in expectation).
It is obtained by running the algorithm of Berman et al.\ (generalized to handle erasures) when $b < \davg$ and running the above algorithm otherwise.

Before stating the algorithm, we formalize the notion of the witness to disconnectedness and argue that partially erased graphs that are far from being connected have many witnesses to disconnectedness.

\begin{definition}
[Witness to disconnectedness]
\label{def:conn-comp}
A set $C$ of vertices is a witness to disconnectedness 
in a partially erased graph $G$ if the adjacency lists of vertices in $C$ have no erasures, and $C$ forms a connected component in every completion of $G$.
\end{definition}

\begin{observation} \label{obs:eps-far-cc}
	Let 
	$\eps \in (0,2/\davg)$ and
	$G'$ be an $m$-edge graph (with no erasures) that is $\eps$-far from connected. Then  $G'$ has at least $\eps m + 1$ connected components.
\end{observation}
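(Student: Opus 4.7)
The plan is to prove the contrapositive-style bound: show that if $G'$ has few connected components, then $G'$ can be made connected by changing few edges, contradicting $\eps$-farness. Concretely, I would let $k$ denote the number of connected components of $G'$, and then exhibit an explicit connected graph $H$ whose edge set differs from that of $G'$ by only $k-1$ edges.

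The key construction is standard: pick one representative vertex from each connected component to obtain vertices $v_1, v_2, \ldots, v_k$, and form $H$ by adding to $G'$ the $k-1$ edges $\{v_1,v_2\}, \{v_2,v_3\}, \ldots, \{v_{k-1},v_k\}$. Since each added edge has endpoints in distinct components of $G'$, no added edge is already present in $G'$, so no parallel edges are created and $H$ is a simple graph on the same vertex set. Moreover, $H$ is connected because the added edges form a spanning path on the component representatives, and each original component is internally connected in $G'\subseteq H$. Therefore the symmetric difference $|E(H)\triangle E(G')|$ is exactly $k-1$.

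Now I would apply the definition of $\eps$-farness directly: since $G'$ has no erasures, it is its own (unique) completion, so $\eps$-farness from connectedness forces every connected graph (and in particular $H$) to differ from $G'$ in at least $\eps m$ edges. Thus $k-1 \geq \eps m$, yielding $k \geq \eps m + 1$ as claimed.

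I do not anticipate a genuine obstacle here; the only subtlety worth noting is the role of the hypothesis $\eps < 2/\davg$, which is equivalent to $\eps m < n$ and ensures that the conclusion $k \geq \eps m + 1$ is consistent with the trivial upper bound $k \leq n$ on the number of components. This hypothesis is not actually used in the argument itself, but it guarantees that the bound is nonvacuous.
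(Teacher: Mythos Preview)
Your proposal is correct. The paper does not actually supply a proof for this statement---it is labeled as an \emph{Observation} and left unproven, since the argument (add $k-1$ edges to connect $k$ components, then invoke $\eps$-farness) is standard; your writeup is exactly the natural justification.
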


Next, in Claim~\ref{clm:no-erasures-witness}, we argue that if the fraction of erasures is \emph{small}, \emph{many} of the connected components present in a completion $G'$ are also present as witnesses to disconnectedness 
in $G$.

\begin{claim}\label{clm:no-erasures-witness}
Let $\eps \in (0, 2/\davg)$ and $\alpha \in [0, {\eps}/{2})$. The number of witnesses to disconnectedness 
in an $\alpha$-erased graph $G$ that is $\eps$-far from connected is at least $(\eps-2\alpha)m$.
\end{claim}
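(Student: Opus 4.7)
The plan is to fix an arbitrary completion $G'$ of $G$, count its connected components via Observation~\ref{obs:eps-far-cc}, and then show that most of these components are in fact witnesses to disconnectedness in $G$ because erased entries are too sparse to ``contaminate'' too many components.

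First, I would apply Observation~\ref{obs:eps-far-cc} to any completion $G'$ of $G$. Since $G$ is $\eps$-far from connected, so is $G'$, and therefore $G'$ has at least $\eps m + 1$ connected components. Call a connected component $C$ of $G'$ \emph{clean} if no vertex of $C$ has any erased entry in its adjacency list in $G$, and \emph{dirty} otherwise. Every erased entry of $G$ is attached to exactly one vertex, which lies in exactly one component of $G'$, so the number of dirty components is at most the number of erased entries, which by definition of an $\alpha$-erased graph is at most $2\alpha m$. Combining the two bounds, the number of clean components is at least $\eps m + 1 - 2\alpha m > (\eps - 2\alpha) m$.

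The main (and essentially only) substantive step is to verify that every clean component $C$ is a witness to disconnectedness in the sense of Definition~\ref{def:conn-comp}. The no-erasure condition is immediate from cleanness. For the second condition, I would argue that for each $v \in C$, the adjacency list $\adj{v}$ contains no $\bot$ symbol, so every completion of $G$ must leave $\adj{v}$ untouched; hence the full set of neighbors of $v$ is the same in every completion and coincides with the neighbor set of $v$ in $G'$. Since $C$ is a connected component of $G'$, these neighbor sets already show both (i) that the induced subgraph on $C$ is connected in every completion, and (ii) that no vertex of $C$ has a neighbor outside $C$ in any completion. Therefore $C$ is a connected component in every completion of $G$, as required.

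I do not expect any real obstacle here: once the correspondence ``dirty component $\Leftrightarrow$ at least one erased endpoint'' is set up, the bound is a direct subtraction. The one place that merits a careful sentence is the claim that edges from $C$ to $V \setminus C$ are controlled purely by the adjacency lists of vertices in $C$; this uses that we are in a simple undirected graph where every edge is recorded at both endpoints in any completion, so an edge crossing $C$ would have to appear in $\adj{v}$ for some $v \in C$, contradicting cleanness together with the fact that $C$ is a component of $G'$.
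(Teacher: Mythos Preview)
Your proposal is correct and follows essentially the same approach as the paper: fix a completion, count its components via Observation~\ref{obs:eps-far-cc}, bound the number of components touched by an erased entry by $2\alpha m$, and subtract. Your write-up is in fact more careful than the paper's, which simply asserts ``the claim follows'' without spelling out why a clean component is a witness in every completion.
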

\begin{proof}
By Observation~\ref{obs:eps-far-cc}, every completion $G'$ of $G$ has at least $\eps m + 1$ connected components.
The number of connected components in $G'$ with at least one erased entry in the union of its adjacency lists (with respect to $G$) is at most $2\alpha m$.
Hence, the number of
connected components in $G'$ that do not have any erased entry in the union of its adjacency lists (with respect to $G$) is at least
$\eps m - 2\alpha m = (\eps - 2\alpha) m$.
The claim follows.
\end{proof}

Let $\avgwit = 2/((\eps-2\alpha)\cdot \davg)$. By Claim~\ref{clm:no-erasures-witness}, the size of the average witness to disconnectedness is at most $b$. Now we are ready to state Algorithm~\ref{alg:connectedness-work-investment}.

\begin{algorithm}
	\caption{Erasure-Resilient Connectedness Tester for $\alpha<\eps/2$}\label{alg:connectedness-work-investment}
	\SetKwInOut{Input}{input}\SetKwInOut{Output}{output}
	\SetKwFor{RepeatTimes}{repeat}{times}{end}
	\Input{The average degree $\davg,$ parameters $\eps \in (0,{2}/{\davg}), \alpha \in [0,{\eps}/{2})$; query access to an $\alpha$-erased graph $G$} 
	\DontPrintSemicolon
	\BlankLine
		\nl Let 
        $\avgwit \gets 2/((\eps-2\alpha)\cdot \davg)$. \tcp*{the average size of a witness is at most~$\avgwit$}
		\nl \For{$i \in \left[\ceil{\log(4\avgwit)}\right]$}{
		\nl \RepeatTimes{$\ceil{\frac{4\avgwit\ln 6}{2^i}}$}{
		\nl Sample a vertex $v$ uniformly and independently at random. \label{step:vertex-sample-erased}\;
		\nl \eIf{$ \avgwit \leq \davg \log \avgwit$}
		{
		\nl Run a BFS from $v$ until it encounters an erased entry or $(2^i+1)$ vertices.\label{step:bfs-bucketing}\;
		}{
		\nl Query $\deg(v)$;\;
        \nl Run a BFS from $v$ until it encounters an erased entry or $(2^{i-1}\cdot \deg(v)+1)$ edges.\label{step:alternate-bfs-bucketing}
		}
		\nl \lIf{{the BFS 
            explored an entire connected component and didn't encounter an erasure}}{
		 \textbf{reject}.\label{stp:witness-check-bucketing}
		}
		}
		}
		\nl \textbf{Accept}.\;
	
\end{algorithm}

Clearly, Algorithm~\ref{alg:connectedness-work-investment} accepts all connected partially erased graphs.

\begin{lemma}\label{lem:connectedness-work-investment-correctness}
Let $\eps \in (0,2/\davg)$ and $\alpha \in [0,\eps/2)$. Let $G$ be an $\alpha$-erased graph that is $\eps$-far from connected. 
Then Algorithm~\ref{alg:connectedness-work-investment} rejects $G$ with probability at least 5/6.
\end{lemma}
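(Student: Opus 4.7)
The plan is to adapt the work-investment strategy of Berman et al.~\cite{BRY14} to the erasure-resilient setting. Applying Claim~\ref{clm:no-erasures-witness}, $G$ contains at least $N := (\eps-2\alpha)m$ vertex-disjoint witnesses, so their average size is at most $n/N = \avgwit$ vertices, and by Markov's inequality at least $N/2$ have size at most $2\avgwit$; call these the \emph{small} witnesses. Let $R_i$ denote the number of samples at iteration $i$ of the outer loop and $p_i$ the probability that a single iteration-$i$ sample triggers rejection at Step~\ref{stp:witness-check-bucketing}. Since all samples are mutually independent, $\Pr[\text{not reject}] \leq \prod_i (1-p_i)^{R_i} \leq \exp(-\sum_i R_i p_i)$, so it suffices to prove $\sum_i R_i p_i \geq \ln 6$.

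In Case~1 ($\avgwit \leq \davg \log \avgwit$, Step~\ref{step:bfs-bucketing}), for each small witness $C$ I set $i^\star := \lceil \log|C|\rceil$; since $|C|\leq 2\avgwit$, $i^\star$ lies in the loop range and $R_{i^\star} \geq 4\avgwit\ln 6/2^{i^\star} \geq 2\avgwit\ln 6/|C|$. A single iteration-$i^\star$ sample lands in $C$ with probability $|C|/n$, and because the BFS budget of $2^{i^\star}+1$ vertices exceeds $|C|$, the BFS exhausts $C$ without hitting an erasure and forces rejection. Thus $C$ contributes at least $(|C|/n)(2\avgwit\ln 6/|C|) = 2\avgwit\ln 6/n$ to $\sum_i R_i p_i$; summing over the $\geq N/2$ small witnesses and using $\avgwit(\eps-2\alpha)\davg = 2$ gives $\sum_i R_i p_i \geq \ln 6$.

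Case~2 ($\avgwit > \davg \log \avgwit$, Step~\ref{step:alternate-bfs-bucketing}) is the main obstacle, because the BFS budget $2^{i-1}\deg(v)+1$ depends on the sampled vertex, so the analysis has to average over vertices within each small witness. For $v$ in a small witness $C$ with edge count $E_C$, let $i^\star(v)$ be the smallest $i$ with $2^{i-1}\deg(v) \geq E_C$; the iteration-$i^\star(v)$ BFS from $v$ then exhausts $C$. A key subtlety is that $i^\star(v)$ can exceed $\lceil \log 4\avgwit\rceil$ when $\deg(v)$ is tiny, so I restrict attention to $C^+ := \{v \in C : \deg(v) \geq E_C/(2\avgwit)\}$, for which $i^\star(v) \leq \lceil \log 4\avgwit\rceil$. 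Combining the handshake identity $\sum_{v \in C}\deg(v) = 2E_C$ with the bound $\sum_{v \in C \setminus C^+}\deg(v) < |C|\cdot E_C/(2\avgwit) \leq E_C$ yields $\sum_{v \in C^+}\deg(v) \geq E_C$. Using $2^{i^\star(v)} < 4 E_C/\deg(v)$ to get $R_{i^\star(v)} \geq \avgwit\ln 6\,\deg(v)/E_C$, the contribution of $C$ to $\sum_i R_i p_i$ is at least $\sum_{v \in C^+} R_{i^\star(v)}/n \geq (\avgwit\ln 6/(nE_C))\sum_{v \in C^+}\deg(v) \geq \avgwit\ln 6/n$; combined with the geometric-series gain obtained by summing $R_i$ over the full tail $i \in [i^\star(v), \lceil \log 4\avgwit\rceil]$ (which contributes up to another factor of two per vertex whenever $i^\star(v)$ lies strictly below the top of the loop), the sum over small witnesses reaches $\sum_i R_i p_i \geq \ln 6$. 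Therefore $\Pr[\text{reject}] \geq 1 - e^{-\ln 6} = 5/6$ in both cases.
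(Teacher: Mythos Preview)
Your Case~1 argument is sound, and the reduction to showing $\sum_i R_i p_i \ge \ln 6$ is a valid route. The problem is in Case~2: the quantities you actually bound give only
\[
\sum_i R_i p_i \;\ge\; \frac{1}{n}\sum_{C\ \text{small}}\ \sum_{v\in C^+} R_{i^\star(v)}
\;\ge\; \frac{1}{n}\cdot\frac{N}{2}\cdot \avgwit\ln 6 \;=\; \frac{\ln 6}{2},
\]
which yields rejection probability $1-e^{-\ln 6/2}=1-1/\sqrt{6}\approx 0.59$, not $5/6$. The ``geometric-series gain'' you invoke does not close this gap. Summing the tail gives
\[
\sum_{i=i^\star(v)}^{t} R_i \;\ge\; 8\avgwit\ln 6\cdot 2^{-i^\star(v)} \;-\; 4\avgwit\ln 6\cdot 2^{-t},
\]
and while the first term indeed doubles your main contribution to $\ln 6$, the boundary term, after summing over $v\in C^+$ and over small witnesses and dividing by $n$, subtracts up to $\frac{\ln 6}{n}\sum_C|C^+|$, which can itself be as large as $\ln 6$ (the witnesses may cover all of $V$). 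So the tail computation formally yields only $\sum_i R_i p_i\ge 0$. The loss is structural: one factor of two disappears in the Markov step (keeping only $N/2$ witnesses), and another in the restriction to $C^+$ (which guarantees only $\sum_{v\in C^+}\deg(v)\ge E_C$, not $2E_C$); the tail can recover at most one of these.

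The paper sidesteps both losses by not using Markov at all. It assigns each vertex a quality $q(v)=\deg(v)/(2E_{(v)})$ (with $q(v)=1$ for isolated witness vertices and $q(v)=0$ outside witnesses), so that $\sum_{v\in C}q(v)=1$ for \emph{every} witness $C$, giving $\E_v[q(v)]\ge N/n=1/\avgwit$ directly. Crucially, the level-$i$ BFS from $v$ succeeds exactly when $q(v)\ge 2^{-i}$, so $p_i=\Pr[q(v)\ge 2^{-i}]$, and one can feed $X=q(v)$, $\beta=1/\avgwit$ straight into the work-investment lemma (Lemma~\ref{lem:work-investment}) to obtain $\prod_i(1-p_i)^{k_i}\le 1/6$. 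The lemma's internal level-set decomposition of $\E[X]$ is precisely what absorbs the constant you are losing through the Markov truncation.
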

\begin{proof}
Let $V$ be the vertex set of $G$.
We start by defining the quality of a vertex $v\in V$. The definition is different for the two cases, corresponding to the two stopping conditions Algorithm~\ref{alg:connectedness-work-investment} uses for BFS.
First, we consider the case when $\avgwit \leq \davg \cdot \log \avgwit,$ that is,
when Algorithm~\ref{alg:connectedness-work-investment} runs the version of BFS specified in Step~\ref{step:bfs-bucketing}.

\begin{definition}[Quality of a vertex when $\avgwit \leq \davg \cdot \log \avgwit$]\label{def:quality1}
The quality of a vertex $v$, denoted $q(v)$, is defined as follows. If $v$ belongs to a witness to disconnectedness in $G$ then $q(v) = 1/|C_{(v)}|$, where $C_{(v)}$ denotes the witness to disconnectedness that $v$ belongs to.
Otherwise, $q(v) = 0$.
\end{definition}

\noindent The important feature of $q(v)$ is that, for a witness $C$ to disconnectedness, $\sum_{v \in C} q(v) = 1$.

Next, we define the quality of a vertex for the case  when $\avgwit > \davg \cdot \log \avgwit$, that is,
when Algorithm~\ref{alg:connectedness-work-investment} runs the version of BFS specified in  Step~\ref{step:alternate-bfs-bucketing}.
\begin{definition}[Quality of a vertex when $\avgwit > \davg \cdot \log \avgwit$]\label{def:quality2}
Fix a completion $G'$ of $G$.
For a vertex $v \in V$, let $C_{(v)}$ denote the connected component (in $G'$) containing $v$, and let $E_{(v)}$ 
denote the number of edges in $C_{(v)}$.
The quality of a vertex $v$, denoted $q(v)$, is defined as
\begin{align*}
q(v) =
\begin{cases}
0 & \text{ if } C_{(v)} \text{ contains at least one erased entry in } G, \\
\frac{\deg(v)}{2 E_{(v)}} & \text{ if } E_{(v)} > 0,\\
1 & \text{ if } E_{(v)} = 0.
\end{cases}
\end{align*}
\end{definition}
\noindent As was the case for $q(v)$ from Definition~\ref{def:quality1}, for a witness $C$ to disconnectedness, $\sum_{v \in C} q(v) = 1$.


The rest of the proof of Lemma~\ref{lem:connectedness-work-investment-correctness} is the same for both cases.
We analyze the expected quality of 
a uniformly random vertex $v \in V$. Using 
the fact that $\sum_{v \in C} q(v) = 1$ and Claim~\ref{clm:no-erasures-witness},
\[
	\E_{v \in V}[q(v)] = \frac 1 n \sum_{v \in V} {q(v)}
	=\frac 1 n \sum_{\substack{C: C~\text{is a witness} \\ \text{to disconnectedness}}} 1
	\ge \frac{(\eps - 2\alpha)m}{n} = \frac{1}{\avgwit}.
\]

Finally, we apply the following work investment strategy lemma due to~\cite[Lemma 2.5]{BRY14}.

\begin{lemma}[\cite{BRY14}]\label{lem:work-investment}
	Let $X$ be a random variable that takes values in $[0,1]$. Suppose $\E[X] \ge \beta$, and let $t = \lceil \log (4/\beta) \rceil$. For all $i \in [t]$, let $p_i = \Pr[X \ge 2^{-i}]$ and $k_i = \frac{4 \ln 6}{2^i \beta}$. Then
$
		\prod_{i=1}^{t} (1 - p_i)^{k_i} \le \frac{1}{6}.
$
\end{lemma}

We apply Lemma~\ref{lem:work-investment} with $X$ equal to $q(v)$ for a uniformly random $v\in V$.
Set $\beta = 1/\avgwit$ and
$t = \ceil{ \log(4/\beta)}$. 
For $i \in [t]$, set $p_i$ to be the probability that a
vertex $v$ sampled uniformly at random belongs to
a witness to disconnectedness of $G$ that has at most
(i)  $2^i$ vertices, when $\avgwit \leq \davg \cdot \log \avgwit$;
(ii) $2^{i-1} \cdot \deg(v)$ edges, otherwise.
That is, $p_i=\Pr[X \ge 2^{-i}]$.
Similarly, for $i \in [t]$, let $k_i=\frac{4\ln 6}{2^i \beta}$. 
Then the probability that Step~\ref{stp:witness-check-bucketing} of the tester does not reject is $\prod_{i=1}^{t} (1 - p_i)^{k_i}$.
By Lemma~\ref{lem:work-investment}, this step rejects with probability at least $5/6$.
\end{proof}

\begin{proof}[Proof of Theorem~\ref{thm:connectedness-tester-work-investment}]
We start by analyzing the query and time complexity of Algorithm~\ref{alg:connectedness-work-investment}.

\noindent\textbf{Case 1:} When $\avgwit \leq \davg \cdot \log \avgwit,$ the query and time complexity of Algorithm~\ref{alg:connectedness-work-investment} is
\[
	\sum_{i \in \left[\ceil{\log(4\avgwit) }\right]} \ceil{\frac{4\avgwit\ln 6}{2^i}} \cdot 2^{2i} = O\left(\avgwit^2 \right) = O( \min\{ \avgwit^2, \avgwit \davg \cdot \log \avgwit \}).
\]

\noindent\textbf{Case 2:}
When $\avgwit >\davg \cdot \log \avgwit,$ the expected query and time complexity of Algorithm~\ref{alg:connectedness-work-investment} is
$$
\sum_{i \in \left[\ceil{\log 4\avgwit }\right]} \ceil{\frac{4\avgwit\ln 6}{2^i}} \cdot 2^i \cdot \E_{s \in V}[\deg(s)] = O(\avgwit \davg \log \avgwit) = O( \min\{ \avgwit^2, \avgwit \davg \cdot \log \avgwit \}).$$
Substituting the value of $\avgwit$, we get: $O( \min\{ \avgwit^2, \avgwit \davg \cdot \log \avgwit \})
= O\big(\min\big\{\frac{1}{((\eps - 2\alpha)\davg)^2}, \frac{1}{\eps-2\alpha} \log \frac{1}{(\eps-2\alpha)\davg}\big\}\big)$.
The final tester is obtained by running Algorithm~\ref{alg:connectedness-work-investment} and then aborting and accepting if the number of queries exceeds six times its expectation. The final tester then has the query complexity and the running time stated in Theorem~\ref{thm:connectedness-tester-work-investment}.

The final tester never rejects a connected partially erased graph. However, a partially erased graph that is $\eps$-far from connected can get accepted incorrectly if Algorithm~\ref{alg:connectedness-work-investment} accepts it or if the final algorithm aborts. The probability of the former event is at most 1/6, by Lemma~\ref{lem:connectedness-work-investment-correctness}. The probability of aborting is also at most 1/6, by Markov's inequality. By a union bound, the final algorithm accepts incorrectly with probability at most 1/3, completing the proof of the theorem for the case when $\davg$ is given to the algorithm.

We can adjust the algorithm to work without access to the average degree at a small cost in query and time complexity, using the technique explained in Appendix~\ref{sec:unknown-avg-degree}.
\end{proof}

\subsection{\texorpdfstring{Our Erasure-Resilient Connectedness Tester for $\alpha \in [\eps/2, \eps)$}{Our Erasure-Resilient Connectedness Tester for intermediate alpha}}\label{sec:gen-conn-tester}
In this section, we prove Theorem~\ref{thm:connectedness-tester}. We describe and analyze 
a 1-sided error $\alpha$-erasure-resilient $\eps$-tester for connectedness that can work with more erasures in the input graph than Algorithm~\ref{alg:connectedness-work-investment} can handle.
Specifically, the tester 
works for all $\alpha < \eps$. However, it has better performance than Algorithm~\ref{alg:connectedness-work-investment} only for $\alpha \in [\eps/2, \eps)$.

When $\alpha > \eps/2$, an $\alpha$-erased graph that is $\eps$-far from being connected may not contain any witnesses to disconnectedness as defined in Section~\ref{sec:optimal-conn-tester}.
Specifically, every set $C$ of nodes that gets completed to a connected component could have an erasure in the union of the adjacency lists of the nodes in $C$.
To get around this issue, our tester looks for a \emph{generalized witness to disconnectedness}, which is, intuitively, a connected component with at most one erasure.
Observe that a component with two erasures could
have a unique completion,
but impossible to certify as a separate connected component from the local view from any of its vertices. Figure~\ref{fig:2erasures} shows an example of a small component, where a BFS from any vertex will be unable to certify that the graph is disconnected.

\begin{figure}
\begin{minipage}{0.475\textwidth}
\centering
\includegraphics[scale=0.45]{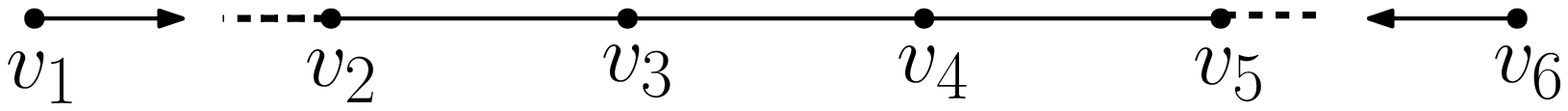}
\caption{\label{fig:2erasures}An example of a component with two erasures, where a BFS from any vertex fails to detect that this component is disconnected from the rest of the graph.}
\end{minipage}\hfill
\begin{minipage}{0.475\textwidth}
\centering
\includegraphics[scale=0.45]{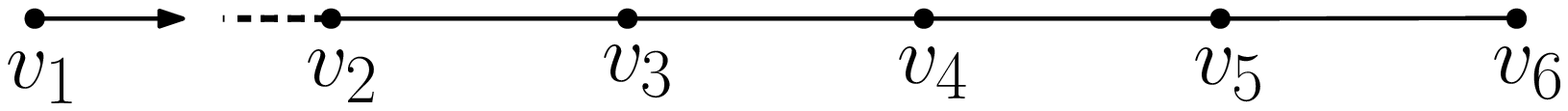}
\caption{\label{fig:1erasure}An example of a generalized witness to disconnectedness, where only a BFS from $v_1$ (but not from any other vertex) detects the generalized witness.}
\end{minipage}

\medskip

A dotted line represents an erasure in the adjacency list of the corresponding vertex.
An arrow pointing from a vertex $a$ in the direction of a vertex $b$ represents that $b\in\adj{a}$, but $a\notin\adj{b}$.
\end{figure}

Our tester repeatedly performs a BFS from a random vertex until it detects a generalized witness to disconnectedness, or exceeds a specified query budget.
We show, by a counting argument, that every partially erased graph that is far from connected has several \emph{small} generalized witnesses to disconnectedness.
The correctness of the tester is ensured by the observation that each such witness $C$ contains at least one vertex from which all the other vertices in $C$ are reachable.
(It is possible to have {\em exactly} one vertex in $C$ from which all the other vertices are reachable. Figure~\ref{fig:1erasure} 
shows an example of a connected component, where a BFS can detect the generalized witness to disconnectedness only if started at vertex $v_1$, but will fail to do so from all other vertices.)

Before we state our tester, we formalize the notion of generalized witnesses.

\begin{definition}[Generalized witness to disconnectedness]\label{def:witness-disconnectedness}
	Given a partially erased graph $G$ over a vertex set $V$, a set $C \subset V$ is a \emph{generalized witness to disconnectedness} of $G$ if
	\begin{enumerate}
 		\item \label{condn:1} there is at most one erased entry ($\bot$) in $\bigcup_{v\in C}\adj{v}$,
		\item\label{condn:2} every nonerased entry in $\bigcup_{v\in C}\adj{v}$ is a vertex from $C$,
		\item\label{condn:3} if $\bot\in\adj{u}$ for some $u \in C$ then $u\in\adj{v}$ but $v\notin\adj{u}$ for some $v \in C$; moreover, each node in $C$ is reachable via a BFS from $v$.
	\end{enumerate}
\end{definition}

Definition~\ref{def:witness-disconnectedness} implies that the only erasure, if any, in the union of the adjacency lists of the nodes in $C$ is part of a half-erased edge within $C,$ and that $C$ forms a connected component in every completion of $G$.

Let $\avgwit = 4/((\eps - \alpha)\davg)$.
Our tester is presented in Algorithm~\ref{alg:connectedness}.
In the rest of the section, we analyze the correctness and complexity of the tester.

\begin{algorithm}
\caption{Erasure-Resilient Connectedness Tester for $\alpha \in [\eps/2, \eps)$}\label{alg:connectedness}
	\SetKwInOut{Input}{input}\SetKwInOut{Output}{output}
	\SetKwFor{RepeatTimes}{repeat}{times}{end}
		\Input{The average degree $\davg,$ parameters $\eps \in (0,{2}/{\davg}), \alpha \in [0,\eps)$; query access to an $\alpha$-erased graph $G$}
		\DontPrintSemicolon
		\BlankLine

		\nl Let $b \gets 4/((\eps - \alpha)\davg)$. \;
		\nl\label{stp:for-loop-connectedness} \RepeatTimes{$\lceil b \ln 3 \rceil$}{
		\nl Sample a vertex $s$ uniformly and independently at random. \label{stp:sampling-bucketing-connectedness} \;
		\nl Run a BFS starting from $s$ using at most $\min\{b^2, b\cdot \davg\}$ neighbor queries.\label{step:bfs} \;
		\nl \label{step:witness-check}\If {Step~\ref{step:bfs} detected a generalized witness to disconnectedness}{
		\nl \textbf{Reject}.
		}
		} \label{stp:end-for-loop-connectedness}
		\nl \textbf{Accept}.\;
\end{algorithm}

\begin{definition}[Small and big sets]\label{def:small-big}
Let $G$ be a partially erased graph and let $\eps^\star \in (0,2/\davg)$ be a parameter. The \emph{representation length} of a set $C$ of nodes is the sum of lengths of the adjacency lists of nodes in $C$. The set $C$ is
 $\eps^\star$-{\em small} if either
	\begin{itemize}
		\item $\eps^\star \geq 4/\davg^2$ and $C$ contains at most $4/( \eps^\star\cdot\davg)$
		vertices, or
		\item $\eps^\star < 4/\davg^2$ and $C$ has representation length at most $4/\eps^\star.$
	\end{itemize}
	The set $C$ is $\eps^\star$-\emph{big} otherwise.
\end{definition}

Claim~\ref{clm:eps-far-charac-2} shows that a partially erased graph that is far from connected has sufficiently many small generalized witnesses to disconnectedness.

\begin{claim}\label{clm:eps-far-charac-2}
	Let $\eps \in (0,2/\davg), \alpha \in [0,\eps)$.
	Let $G$ be an $\alpha$-erased graph that is $\eps$-far from connected.
	The number of $(\eps - \alpha)$-small generalized witnesses to disconnectedness of $G$ is at least $(\eps - \alpha) m/2$.
\end{claim}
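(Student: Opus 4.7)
The plan is to establish the claim by fixing an arbitrary completion $G'$ of $G$, counting the connected components of $G'$ that do \emph{not} correspond to generalized witnesses of $G$, and then subtracting the $(\eps-\alpha)$-big ones. By Observation~\ref{obs:eps-far-cc}, $G'$ has at least $\eps m + 1$ connected components. For each component $C$ of $G'$, let $e_C$ denote the number of erased entries in $\bigcup_{v\in C}\adj{v}$. Since $G$ has at most $2\alpha m$ erasures in total, $\sum_C e_C \le 2\alpha m$, and so the number of components with $e_C \ge 2$ is at most $\alpha m$.

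The key step is to argue that every component $C$ of $G'$ with $e_C \le 1$ is a generalized witness to disconnectedness of $G$. Condition 2 of Definition~\ref{def:witness-disconnectedness} holds automatically: every nonerased entry in $\adj{v}$ for $v \in C$ must be filled by a neighbor in $G'$, which lies in $C$ since $C$ is a connected component. If $e_C = 0$, conditions 1 and 3 hold trivially. If $e_C = 1$, then the unique erasure $\bot \in \adj{u}$ must correspond to a half-erased edge $\{u,v\}$ within $C$; indeed, a fully erased edge would force two erasures in $\bigcup_{v\in C}\adj{v}$, and the completion of $\bot$ must be some $v$ lying in the same $G'$-component $C$. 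Moreover, starting a BFS from $v$ and using nonerased entries, we reach $u$ directly (since $u\in\adj{v}$) and from $u$ we reach all remaining vertices of $C$ (since the only edge lost in $G$ compared to $G'$ within $C$ is the half $u\to v$, which is not needed once $v$ has been reached first). Therefore $C$ satisfies condition 3 as well, so the number of generalized witnesses of $G$ is at least $(\eps m + 1) - \alpha m \ge (\eps-\alpha)m$.

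It remains to subtract the $(\eps-\alpha)$-big generalized witnesses. Writing $\eps^\star = \eps-\alpha$, I would split into the two cases of Definition~\ref{def:small-big}. If $\eps^\star \ge 4/\davg^{\,2}$, an $\eps^\star$-big set contains more than $4/(\eps^\star\davg)$ vertices, so the number of such disjoint sets is strictly less than $n/(4/(\eps^\star\davg)) = \eps^\star\davg n/4 = \eps^\star m/2$, using $\davg = 2m/n$. If $\eps^\star < 4/\davg^{\,2}$, an $\eps^\star$-big set has representation length greater than $4/\eps^\star$, and since the total representation length is $2m$, the number of such disjoint sets is strictly less than $2m/(4/\eps^\star) = \eps^\star m/2$. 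In either case, fewer than $\eps^\star m/2$ components of $G'$ are $\eps^\star$-big, and so the number of $(\eps-\alpha)$-small generalized witnesses is at least $(\eps-\alpha)m - (\eps-\alpha)m/2 = (\eps-\alpha)m/2$.

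The only delicate step I expect is the verification of condition 3 of Definition~\ref{def:witness-disconnectedness} for components with exactly one erasure, because condition 3 is stated in terms of a specific BFS-reachability property rather than abstract connectivity. The argument above, however, shows that the BFS requirement follows structurally from the fact that $C$ is already connected in $G'$ and that the single lost directed adjacency $u\to v$ is redundant once BFS enters $u$ via $v$. Everything else is a straightforward double-counting of erasures and a size/representation-length averaging argument.
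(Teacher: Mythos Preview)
Your proof is correct and follows essentially the same approach as the paper: fix a completion $G'$, use Observation~\ref{obs:eps-far-cc} to get at least $\eps m+1$ components, show that each component with at most one erasure is a generalized witness (with the same verification of Condition~3 via the half-erased edge and the redundancy of the missing entry for BFS from $v$), and bound separately the components with $\ge 2$ erasures (at most $\alpha m$) and the $(\eps-\alpha)$-big components (fewer than $(\eps-\alpha)m/2$). The only cosmetic difference is the order in which you perform the two subtractions: the paper first removes the big components to get at least $(\eps+\alpha)m/2$ small components and then removes the $\alpha m$ with $\ge 2$ erasures, whereas you first remove the $\alpha m$ with $\ge 2$ erasures to get at least $(\eps-\alpha)m$ generalized witnesses and then remove the big ones---the arithmetic lands at $(\eps-\alpha)m/2$ either way.
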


\begin{proof}
	We first argue that there are many small connected components in every completion $G'$ of $G$ and then prove that many of these are generalized witnesses in $G$.

	Consider a completion $G'$ of $G$.
	If $\eps - \alpha \geq 4/\davg^2$, the number of $(\eps - \alpha)$-big connected components in $G'$ is 
	at most $n/b = (\eps - \alpha)m/2$. 
	If $\eps - \alpha < 4/\davg^2$, the number of $(\eps - \alpha)$-big connected components in $G'$ is 
	at most $2m/(b\cdot \davg) = (\eps - \alpha)m/2$, since the representation length of the vertex set $V$ of $G$ is $2m$. 
	By Observation~\ref{obs:eps-far-cc}, the total number of connected components in $G'$ is at least $\eps m + 1$.
	Hence, the number of $(\eps - \alpha)$-small connected components in $G'$ is at least $(\eps+\alpha) m/2$.

	Let $C \subset V$ denote the set of vertices corresponding to an $(\eps - \alpha)$-small connected component in $G'$.
	If $\bigcup_{v\in C}\adj{v}$ has no erasures, 
    then $C$ is a generalized witness to disconnectedness of $G$.
	Next, assume that $\bigcup_{v\in C}\adj{v}$ has exactly one erasure.
	We show that the set $C$ is a generalized witness to disconnectedness of $G$.
	Condition~\ref{condn:1} is satisfied by definition.
	Condition~\ref{condn:2} is true since $C$ forms a connected component in $G'$. 
	To see that Condition~\ref{condn:3} holds, let $u \in C$ be the vertex with $\bot\in \adj{u}$.
	Since $C$ is a connected component in $G'$, this erased entry was completed with the label of another vertex $v \in C$. Moreover, every vertex in $C$ is reachable by a BFS from $v$, since $C$ forms a connected component in $G'$, and the erased entry is not needed for these searches because it would lead back to $v$.
	Therefore, $C$ is a generalized witness to disconnectedness of $G$ if $\bigcup_{v\in C}\adj{v}$ has exactly one erasure.

Among the $(\eps-\alpha)$-small connected components
 in $G'$, at most $\alpha m$ have at least $2$ erased entries in the union of their adjacency lists.
Hence,
the number of $(\eps - \alpha)$-small generalized witnesses to 
disconnectedness of $G$
is at least $((\eps+\alpha)m/2) - \alpha m = (\eps - \alpha)m/2$.
	\end{proof}

Lemma~\ref{lem:connectedness-tester} below implies Theorem~\ref{thm:connectedness-tester}.

\begin{lemma}\label{lem:connectedness-tester}
For every $\eps \in (0,2/\davg)$ and $\alpha \in [0,\eps),$ Algorithm~\ref{alg:connectedness} is an $\alpha$-erasure-resilient $\eps$-tester for connectedness of graphs with the average degree $\davg$. It has 
$O(b^2 \davg \cdot \min\{b/\davg, 1\})$ query and time complexity.
\end{lemma}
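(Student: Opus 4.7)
The plan is to verify three things: one-sided error, soundness via a vertex-sampling argument, and the stated query/time bound. One-sided error is immediate, since the tester rejects only in Step~\ref{step:witness-check}, and detecting a generalized witness to disconnectedness certifies (by Definition~\ref{def:witness-disconnectedness}) that every completion of $G$ has at least two connected components.

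For soundness, suppose $G$ is $\eps$-far from connected. Claim~\ref{clm:eps-far-charac-2} guarantees at least $(\eps-\alpha)m/2$ distinct $(\eps-\alpha)$-small generalized witnesses to disconnectedness. I will first check that for each such witness $C$, a BFS from an appropriate starting vertex $v \in C$ finishes within the query budget $\min\{b^2, b\davg\}$ used in Step~\ref{step:bfs}. If $\eps-\alpha \ge 4/\davg^2$, then $|C|\le b$; the number of neighbor queries a BFS needs to exhaust $C$ equals the sum of degrees over $C$, which, since $C$ forms a component in the completion, is at most $|C|(|C|-1)\le b^2$. If $\eps-\alpha<4/\davg^2$, the representation length of $C$ is at most $4/(\eps-\alpha)=b\davg$, so again the BFS uses at most $b\davg$ neighbor queries. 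In either case the query budget is sufficient to see all of $C$ and thus to verify Conditions~\ref{condn:1}--\ref{condn:3} of Definition~\ref{def:witness-disconnectedness}.

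Call a vertex \emph{good} if some generalized witness $C$ contains it as a starting vertex from which BFS reveals all of $C$. If $\bigcup_{v\in C}\adj{v}$ has no erasures, every vertex in $C$ is good; if it has one erasure, at least the vertex $v$ guaranteed by Condition~\ref{condn:3} is good. Hence each of the $(\eps-\alpha)m/2$ small witnesses contributes at least one good vertex, and these vertex sets are disjoint across distinct witnesses (a vertex lies in at most one component of any completion). Therefore the number of good vertices is at least $(\eps-\alpha)m/2$, and the probability that a uniformly random vertex is good is at least
\[
\frac{(\eps-\alpha)m}{2n} \;=\; \frac{(\eps-\alpha)\davg}{4} \;=\; \frac{1}{b}.
\]
Over the $\lceil b\ln 3\rceil$ independent samples in Step~\ref{stp:for-loop-connectedness}, the probability that no good vertex is ever sampled is at most $(1-1/b)^{b\ln 3}\le e^{-\ln 3}=1/3$, giving rejection probability at least $2/3$.

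For complexity, each iteration makes at most $\min\{b^2, b\davg\}$ neighbor queries plus $O(1)$ other operations, and there are $O(b)$ iterations, so the total is
\[
O\!\left(b\cdot\min\{b^2, b\davg\}\right) \;=\; O\!\left(b^2\cdot\min\{b,\davg\}\right) \;=\; O\!\left(b^2 \davg \cdot \min\{b/\davg,1\}\right),
\]
as claimed. The main thing to be careful about is the case split in Definition~\ref{def:small-big}: I must verify that in both regimes the same single budget $\min\{b^2,b\davg\}$ suffices to traverse every small witness from a good starting vertex, which is exactly what the two calculations above establish.
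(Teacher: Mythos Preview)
Your proposal is correct and follows essentially the same approach as the paper's proof: one-sided error via the definition of generalized witness, soundness via Claim~\ref{clm:eps-far-charac-2} plus a vertex-sampling argument yielding success probability at least $1/b$ per iteration, and the same complexity calculation. The only place to tighten is the budget check: you show the representation length of a small witness is at most $b^2$ in the first regime and at most $b\davg$ in the second, but to conclude that the single budget $\min\{b^2,b\davg\}$ suffices you also need that $\eps-\alpha\ge 4/\davg^2$ is equivalent to $b\le \davg$, so that in each regime your bound \emph{is} the minimum; the paper makes this explicit, and you should too.
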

\begin{proof}
	Consider an $\alpha$-erased graph $G$ over a vertex set $V$.
	Assume that $G$ is connected, that is, there exists a connected completion $G'$ of $G$.
	Consider an arbitrary $C \subset V$.
	There exist vertices $u \in C$ and $v \in V\setminus C$ such that $\adj{u}$ in $G'$ contains $v$. Hence, $C$ is not a generalized witness to disconnectedness of $G$.
	Therefore, the tester accepts $G$.
	
	Next, assume that $G$ is $\eps$-far from connected.
	Let $\mathcal{W}$
	denote the family of all $(\eps - \alpha)$-small generalized witnesses to disconnectedness of $G$. 
	Let $C \subset V$ be an element of $\mathcal{W}$.
	If $\eps - \alpha \geq 4/\davg^2$, the representation length of $C$ is at most $b^2 \le b\cdot \davg$. 
	If $\eps - \alpha < 4/\davg^2$, the representation length of $C$ is at most $b\cdot \davg < b^2$. 
	Hence, the representation length of $C$ is at most $\min\{b^2, b\cdot \davg\}$.
	If $\bigcup_{v\in C}\adj{v}$ has no erasures then every vertex in $C$ is reachable from every other vertex in $C$.
	Otherwise, the vertex $v$ in Condition~\ref{condn:3} of Definition~\ref{def:witness-disconnectedness} is such a vertex.
	If Algorithm~\ref{alg:connectedness} performs a BFS from $v$, 
	it will detect a generalized witness to disconnectedness after at most $\min\{b^2, b\cdot \davg\}$ queries and reject.
	Since $|\mathcal{W}| \ge (\eps - \alpha)m/2$
	and each generalized witness in $\mathcal{W}$ has at least one vertex from which the generalized witness is detectable by a BFS,
	a single iteration of Algorithm~\ref{alg:connectedness} rejects with probability at least $|\mathcal{W}|/n = 1/b$.
	Hence, Algorithm~\ref{alg:connectedness} rejects with probability at least
	$1 - (1 - (1/b))^{\lceil b\ln 3\rceil} \geq 1 - \exp(-\ln 3) = 2/3.$
	
	Step~\ref{step:bfs} of Algorithm~\ref{alg:connectedness} makes at most $\min\{b^2, b\davg\}$ queries. 
	Thus, the query complexity of Algorithm~\ref{alg:connectedness} is $O(b\cdot \min\{b^2, b\davg\})$, 
	which simplifies to the claimed expression.
	Checking (in Step~\ref{step:witness-check}) whether a set $C$ is a generalized witness to disconnectedness can be done with a constant number of passes over the adjacency lists of vertices in $C$. Since the algorithm queried all entries in them, its running time is asymptotically equal to its 
	query complexity.
\end{proof}

\subsection{A Lower Bound for Erasure-Resilient Connectedness Testing}\label{sec:connectedness-lb}

In this section, we prove Theorem~\ref{thm:connectedness-lb}. We note that hard graphs in our construction have constant average degree. That is, for those graphs, our lower bound is $\Omega(n)=\Omega(m)$.

\begin{proof}[Proof of Theorem~\ref{thm:connectedness-lb}]
	We apply Yao's minimax principle, as stated in~\cite{RS06}.
	Specifically, we construct distributions $\Dyes$ and $\Dno$, the former over connected graphs and the latter over graphs that are $\eps$-far from connected, such that every deterministic $\eps$-erasure-resilient $\eps$-tester for connectedness makes $\Omega(m)$ queries to distinguish the two distributions.
	
	\begin{figure}
		\centering
		\includegraphics[scale=0.7]{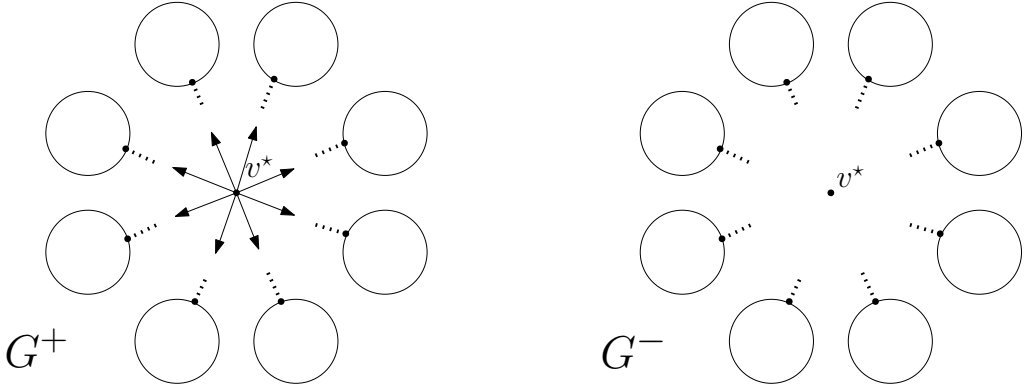}
		\caption{The partially erased graphs $\Gyes$ and $\Gno$ described in the proof of Theorem~\ref{thm:connectedness-lb}. The dotted lines represent erased entries in the adjacency lists of the corresponding vertices.
In $\Gyes$, the directed edges from $v^\star$ point to the vertices in its adjacency list.
The
circles represent cycles.}
		\label{fig:2-sided-LB}
	\end{figure}

Without loss of generality, assume that 
$t = (1-\eps)/(2\eps)$ is an integer. Observe that $t \geq 3$ as $\eps \leq 1/7$. Let $k$ be an even number and $n = kt + 1$.
We first construct two partially erased $n$-node graphs $\Gyes$ and $\Gno$, depicted in Figure~\ref{fig:2-sided-LB}.
The vertices of $\Gyes$ are partitioned into $k + 1$ sets. Each of the first $k$ sets induces a $t$-node cycle.
Exactly one node in each cycle has degree 3 and has an erasure in its adjacency list, in addition to its two neighbors on the cycle. The last set contains a single node $v^\star$ of degree $k$. Its  adjacency list contains the labels of the degree-3 vertices in the cycles. The graph $\Gno$ is the same as $\Gyes$, except that in $\Gno$, we have that $\adj{v^\star}$ is empty, that is, $v^\star$ is isolated.

	We can obtain a connected completion of $\Gyes$ by connecting the vertex $v^\star$ to all the degree-3 vertices.
	In contrast, at least $k/2$ edges need to be added to every completion of $\Gno$ to make it connected.
	Hence, the distance from $\Gno$ to connectedness is
	$(k/2)/(kt + k/2) = 1/(2t + 1) = \eps$.
	
	The fraction of erased entries in the adjacency lists of $\Gyes$ and $\Gno$ are $1/(2t + 2)$ and $1/(2t + 1)$, respectively.
	That is, $\Gyes$ and $\Gno$ are both $\alpha$-erased graphs for $\alpha = 1/(2t+1) = \eps$.
	
The distributions $\Dyes$ and $\Dno$ are uniform over the sets of all partially erased graphs isomorphic to $\Gyes$ and $\Gno$, respectively.
	Each partially erased graph sampled from $\Dyes$ is connected.
	Each partially erased graph sampled from $\Dno$ is $\eps$-far from connected.

	\begin{claim}
		Every deterministic algorithm $A$ has to make $\Omega(n)$ queries to distinguish $\Dyes$ and $\Dno$ with probability at least $2/3$.
	\end{claim}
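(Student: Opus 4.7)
The plan is to apply Yao's minimax principle via a coupling between $\Dyes$ and $\Dno$. I sample a single uniformly random bijection $\pi$ from the vertex labels $[n]$ to the structural positions of $\Gyes$ and use this same $\pi$ to label both $\Gyes$ and $\Gno$. In the coupled pair, the two graphs differ only in the adjacency list of the special vertex, whose label is $\ell^\star := \pi^{-1}(v^\star)$. Hence, if the deterministic algorithm $A$ never queries $\ell^\star$---neither via a degree query on $\ell^\star$ nor via any neighbor query on $\ell^\star$---the transcript on the two graphs is identical and $A$'s output can be correct on at most one of them.

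Let $E$ denote the event that $A$ queries $\ell^\star$ within its $q$ queries. On the uniform mixture of $\Dyes$ and $\Dno$, the coupling yields
\[
\Pr[A\text{ is correct}] \le \tfrac12\Pr[\neg E] + \Pr[E] = \tfrac12 + \tfrac12 \Pr[E],
\]
so a success probability of at least $2/3$ forces $\Pr[E]\ge 1/3$. It therefore suffices to show $\Pr[E]=O(q/n)$.

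The structural fact driving the bound is that $\ell^\star$ does not appear in any other vertex's adjacency list in either $\Gyes$ or $\Gno$: in $\Gno$ the special vertex is isolated, and in $\Gyes$ each edge incident to $v^\star$ has its endpoint on the cycle side erased (these are exactly the $\bot$ entries in the adjacency lists of the degree-$3$ vertices). Consequently, every answer $A$ receives before it queries $\ell^\star$ identifies only the structural role of some cycle vertex, and rules out at most $O(1)$ additional labels from being $\ell^\star$. I would formalize this via a lazy adversary that, each time $A$ encounters a new label (either by querying it directly or by receiving it as an answer to a neighbor query), assigns that label a fresh random structural role consistent with all previous answers. A standard invariant-preservation argument then shows that, conditioned on $\ell^\star$ not having been revealed after the first $i-1$ queries, $\ell^\star$ is uniform over the at least $n - O(i)$ yet-unassigned labels. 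A union bound yields $\Pr[E] \le \sum_{i=1}^{q}\tfrac{1}{n-O(i)} = O(q/n)$, which is below $1/3$ for $q\le cn$ with a sufficiently small absolute constant $c>0$.

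The main obstacle is making the lazy-adversary coupling fully rigorous: one must verify that the distribution over transcripts it produces coincides with the one induced by a truly uniform $\pi$, which requires tracking how partial answers constrain the remaining consistent bijections. Once this is established, combining the two displayed inequalities forces $q=\Omega(n)$; since the average degree of $\Gyes$ and $\Gno$ is $O(1)$ we have $m = \Theta(n)$, so this is also $\Omega(m)$, yielding the asserted linear-in-representation lower bound.
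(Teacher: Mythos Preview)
Your proposal is correct and follows essentially the same approach as the paper: both observe that $v^\star$ never appears as an answer to any neighbor query (its incident edges are erased on the cycle side in $\Gyes$, and it is isolated in $\Gno$), so the only way for $A$ to discover $v^\star$ is to directly query a not-yet-seen label, which happens with probability $O(q/n)$ by a union bound. The paper's presentation is slightly more direct---it simply tracks that at most $2q$ labels become ``known'' after $q$ queries and bounds the probability that a queried unknown label equals $v^\star$ by $1/(n-2q)$---thereby sidestepping the lazy-adversary formalization you flag as the main obstacle.
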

\begin{proof}
Let $q$ denote the number of queries made by $A$ and assume $q\leq n/6$.
In this proof, we use $v^\star$ as a shorthand for the vertex from the singleton set in the construction of $\Dyes$ and $\Dno$, as opposed to the label of that vertex.
Since $\Dyes$ and $\Dno$ differ only on $v^\star$, it is important to understand when $A$ gets any information about $v^\star$.
\begin{definition}[Node status]\label{def:queried-seen-unknown}
			Given a sequence of queries made by $A$ and answers it has received so far, a node $v$ is {\em known} if it has been queried (via a degree or neighbor query) or received as an answer to a (neighbor) query; otherwise, it is \emph{unknown}.
\end{definition}

The node $v^\star$ is \emph{unknown} before $A$ makes its first query. Since $v^\star$ cannot be received as an answer to a query for the graphs in the support of $\Dyes$ and $\Dno$, it can become \emph{known} only if $A$ queries an \emph{unknown} node that happens to be $v^\star$. At most two new nodes become \emph{known} per query. So, the probability (over the distribution $\Dyes$ or $\Dno$) that a specific \emph{unknown} node queried by $A$ turns out to be $v^\star$ is at most $1/(n-2q)$.
Let $p$ denote  the probability that $v^\star$ becomes \emph{known} by the end of an execution of $A$.
By a union bound over all queries made by $A$,
\[p\leq \frac q {n-2q}\leq \frac{n/6}{n-n/3}=\frac 1 4.\]

If $v^\star$ is \emph{unknown} by the end of a particular execution then the view of the partially erased graph obtained by $A$ in that execution arises with the same probability under $\Dyes$ and under $\Dno$.
Such an execution of $A$ can distinguish $\Dyes$ and $\Dno$ with probability at most $1/2$.
		Therefore, the probability that $A$ distinguishes $\Dyes$ and $\Dno$ is at most $p+(1-p)\cdot \frac{1}{2}= \frac 1 2 +\frac p 2 < \frac{2}{3}$.
	\end{proof}
	In our construction, $m=\Theta(n)$.
	Thus, every $\eps$-erasure-resilient $\eps$-tester for connectedness that uses only degree and neighbor queries must make $\Omega(m)$ queries in the worst case over the input graph, completing the proof of Theorem~\ref{thm:connectedness-lb}.
\end{proof}

\section{Estimating the Average Degree of a Graph}\label{sec:av-degree}
In this section,
we present our results on
erasure-resilient estimation of the average degree of graphs.

\subsection{An Algorithm for Estimating the Average Degree}\label{sec:alg-av-degree}
In this section, we describe and analyze an algorithm for estimating the average degree of (or, equivalently, the number of edges in) a partially erased graph and prove Theorem~\ref{thm:degree-ub}. Our algorithm is a generalization of the algorithm for counting the number of edges in graphs by Eden et al.~\cite{ERS17, ERS} to the case of partially erased graphs.
We first give an algorithm (Algorithm~\ref{alg:new-average-degree-oracle}) that takes a crude estimate of the average degree as input and outputs a more accurate estimate. Our final algorithm (Algorithm~\ref{alg:new-average-degree-estimation}) uses Algorithm~\ref{alg:new-average-degree-oracle} as a subroutine to gradually refine its estimate of the average degree.

Algorithm~\ref{alg:new-average-degree-oracle}, like the algorithm of Eden et al.~\cite{ERS17, ERS}, works by empirically estimating a random variable whose
expectation is close to the number of edges in the graph.
We first rank vertices according to their degrees, breaking ties arbitrarily.
Then we orient the nonerased edges of the graph from lower-ranked to higher-ranked endpoints.
This orientation allows us to attribute each nonerased edge to its lower-ranked endpoint in order to avoid double-counting the edge.
Since the number of edges between high-degree vertices is small, we ignore such edges.
Algorithm~\ref{alg:new-average-degree-oracle} samples low-degree vertices uniformly at random and estimates, via sampling, the number of edges ``credited'' to them.

The crucial difference in the behavior of the algorithm in the case of partially erased graphs is the following.
When we sample an erased entry from the adjacency list of a low-degree vertex $u$, we assume that it gets completed to a vertex ranked higher than $u$ and, therefore, attribute the corresponding edge to $u$.
Consequently, some erased edges get counted twice.
This results in the additional term depending on the fraction of erasures in the approximation guarantee.

The ranking or the total ordering on the vertices of a graph is defined below.
\begin{definition}[Total ordering $\prec$]\label{def:vertex-order}
In a partially erased graph $G$, for any two vertices $u,v,$ we write $u \prec v$ if either $\deg(u) < \deg(v)$, or $\deg(u) = \deg(v)$ and $u$ is lexicographically smaller than $v$.
\end{definition}

\begin{algorithm}
	\caption{Erasure-Resilient Algorithm for Improving an Estimate~of Average~Degree}	\label{alg:new-average-degree-oracle}

	\SetKwInOut{Input}{input}\SetKwInOut{Output}{output}
	\SetKwFor{RepeatTimes}{repeat}{times}{end}
	\Input{Parameters $\eps \in (0,1/2), \delta \in (0, 1/3)$; query access to a partially erased graph $G$ on $n$ nodes; a crude estimate $\destin$ of the average degree of $G$}
	\DontPrintSemicolon
	\BlankLine

		\nl\label{step:no-samples} Set $s \gets \Big\lceil 660 \ln(2/\delta)\sqrt{\frac{n}{\eps^{5}\cdot\destin}} \Big\rceil$.\;
		\nl \For{$i = 1$ to $s$}{
		\nl Sample a node $u$ from $V$ uniformly at random and query its degree, $\deg(u)$.\;
		\nl Query the oracle for a uniformly random entry $v$ from $\adj{u}$.\;
		\nl If $v \ne \bot$ then query its degree, $\deg(v)$.\;
		\nl\label{step:deg-if}\eIf{$\deg(u) \leq 4 \sqrt{{n\destin}/{\eps}}$ \textbf{and} either $v = \bot$ or $u \prec v$}{
		\nl\label{step:chi-deg} $\chi_i \gets \deg(u)$\;}
		{
		\nl \label{step:chi-0} $\chi_i \gets 0$ \;
		}
		}
		\nl\label{step:deg-output} \Return $\dest =2 \cdot \frac{1}{s} \sum\limits_{i =1}^{s} \chi_i$\,.\;

\end{algorithm}

\begin{lemma}\label{lem:avg-deg-guarantee}
Let $G$ be an $\alpha$-erased $n$-node graph with the average degree $\davg \geq 1$. Let  $\destin$ be a crude estimate of the average degree, given as an input to  Algorithm~\ref{alg:new-average-degree-oracle}. Then the output $\dest$ of Algorithm~\ref{alg:new-average-degree-oracle} satisfies the following:
\begin{enumerate}
\item\label{enum:bad-m-est} If $\destin \geq \frac{\davg}{8}$ then, with probability at least $3/4$, we have
$\dest \leq 8\davg.$

\item\label{enum:good-m-est} Furthermore, if $\frac{\davg}{8} \leq \destin \leq 8\davg$ then with probability at least $1 - \delta$,
\[ (1-\eps) \cdot \davg < \dest < (1+\eps+2 \min(\alpha, \tfrac{1}{2})) \cdot \davg. \]
\end{enumerate}
The query complexity of the algorithm is $\Theta\left( \sqrt{\frac{n}{\eps^5 \cdot \destin}} \cdot \log\frac{1}{\delta} \right)$.
\end{lemma}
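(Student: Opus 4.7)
The plan is to reduce both claims to an analysis of $\E[\chi_i]$ for a single sample, followed by a standard concentration argument calibrated to the choice of $s$ in Algorithm~\ref{alg:new-average-degree-oracle}. Setting $\tau = 4\sqrt{n\destin/\eps}$ and $L = \{u \in V : \deg(u) \leq \tau\}$, I would condition on the sampled vertex and on the sampled entry of $\adj{u}$ to obtain $\E[\chi_i] = \frac{1}{n}\sum_{u \in L}(e(u) + h(u))$, where $e(u)$ is the number of $\bot$ entries in $\adj{u}$ and $h(u)$ is the number of non-erased entries $v \in \adj{u}$ with $u \prec v$.

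Next I would establish the bounds on $S = \sum_{u \in L}(e(u) + h(u))$ that drive the rest of the argument. Two upper bounds are needed. The \emph{degree-sum} bound $S \le \sum_u \deg(u) = 2m$ follows from $e(u) + h(u) \le \deg(u)$, is valid regardless of $\destin$, and gives $\E[\dest] \le 2\davg$. The \emph{per-edge} bound comes from charging every edge of the underlying graph: non-erased edges contribute at most once to $S$ (at their lower-ranked endpoint, if it lies in $L$), while half-erased and fully-erased edges may each contribute at most twice, so $S \le m_{ne} + 2m_{he} + 2m_{fe}$, and applying the erasure budget $m_{he} + 2m_{fe} \le 2\alpha m$ yields $\E[\dest] \le (1 + 2\alpha)\davg$. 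Combining the two gives $\E[\dest] \le (1 + 2\min(\alpha, 1/2))\davg$. For the lower bound, I would verify case by case that every edge with at least one endpoint in $L$ contributes at least once to $S$, regardless of its erasure type. The only omitted edges are those with both endpoints in $H = V \setminus L$; since $|H| \le 2m/\tau$, there are at most $2m^2/\tau^2 = m^2\eps/(8n\destin)$ such edges, and when $\destin \ge \davg/8$ this is at most $\eps m/2$, giving $\E[\dest] \ge (1 - \eps/2)\davg$.

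For claim~2, the variables $\chi_1, \dots, \chi_s$ are i.i.d., lie in $[0, \tau]$, and have variance at most $\tau \cdot \E[\chi_i]$. A Bernstein-type concentration inequality then shows that the setting $s = \Theta(\sqrt{n/(\eps^5 \destin)}\ln(1/\delta))$ is precisely enough to guarantee $|\dest - \E[\dest]| \le \eps\davg/2$ with probability at least $1 - \delta$ in the regime $\destin \le 8\davg$; combining this with the two-sided bounds on $\E[\dest]$ yields $(1-\eps)\davg < \dest < (1 + 2\min(\alpha, 1/2) + \eps)\davg$. For claim~1, the \emph{universal} bound $\E[\dest] \le 2\davg$ together with Markov's inequality immediately gives $\Pr[\dest > 8\davg] \le 1/4$, with no restriction on how large $\destin$ may be.

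The main technical subtlety is the per-edge charging for half-erased edges: a half-erased edge $\{u,v\}$ in which $u$ holds the $\bot$ may contribute to $S$ both through $e(u)$ and through $h(v)$ when $v \prec u$, so tracking the two mechanisms separately is what produces the factor $(1+2\alpha)$ rather than a cruder $(1+\alpha)$. Intersecting this per-edge bound with the degree-sum bound $2\davg$ then yields the saturated factor $(1 + 2\min(\alpha, 1/2))$ claimed by the lemma.
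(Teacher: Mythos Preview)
Your proposal is correct and follows essentially the same approach as the paper: your $L,H,e(u),h(u)$ are the paper's $V\setminus\cH,\cH,\dbot(u),\outdeg(u)$; your per-edge charging for the $(1+2\alpha)$ bound is equivalent to the paper's combination of $\sum_u\outdeg(u)\le m$ with $\sum_u\dbot(u)\le 2\alpha m$; your $|H|^2/2$ bound on $H$--$H$ edges is exactly the paper's $\sum_{u\in\cH}\outdeg_{G'}(u)<|\cH|^2/2$; and the paper's concentration step (called ``Hoeffding'' there) is a multiplicative Chernoff--Hoeffding bound for $[0,\tau]$-valued variables, i.e., the Bernstein-type inequality you describe.
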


\begin{proof}
The algorithm makes at most two degree queries and one neighbor query in each iteration, and it runs for $\Theta\Big( \sqrt{\frac{n}{\eps^5 \cdot \destin}} \cdot \log\frac{1}{\delta} \Big)$ iterations. Hence, the bound on its query complexity is as claimed in the lemma.

To prove the guarantees on the output estimate $\dest$,
we first show that for all $i \in [s]$, the expected value of $\chi_i$ is a good estimate to the average degree of the partially erased graph, where $s$ is the number of samples taken by Algorithm~\ref{alg:new-average-degree-oracle}.
We then apply Markov's inequality and Chernoff bound to prove parts~\ref{enum:bad-m-est} and~\ref{enum:good-m-est} of the lemma, respectively.
For all $i \in [s]$, the random variables $\chi_i$ set by the algorithm are mutually independent and identically distributed. Hence, it suffices to bound $\E[\chi_1]$.

\begin{claim}\label{clm:chi-i}
If $\destin \geq \frac{\davg}{8}$ then 
\[\left(1 - \frac{\eps}{2}\right) \cdot \frac{\davg}{2} < \E[\chi_1] \leq  \left(1 + 2 \min\left(\alpha, \frac{1}{2}\right) \right) \cdot \frac{\davg}{2}.\]
\end{claim}

\begin{proof}
Let $m = {n\davg}/{2}$ denote the total number of edges in the graph, and
\[\cH = \left\{u \in V \ \Big| \ \deg(u) > 4\sqrt{{n\destin}/{\eps}} \right\}\]
denote the set of  high degree vertices. Let $\mestin = n\destin/2$ be the number of edges in the graph estimated from the input parameter $\destin$. Since $\destin \geq \davg/8$, we have $\mestin \geq {m}/{8}$.  Hence,
\begin{equation} \label{eq:h-bound}
|\cH| < \frac{2m}{4\sqrt{n\destin/\eps}} = \frac{m}{2\sqrt{2\mestin/\eps}} \leq \frac{m}{\sqrt{m/\eps}} = \sqrt{\eps m},
\end{equation}
where the first inequality holds because the sum of degrees of high-degree vertices is at most $2m$, and the second inequality follows from $\mestin \geq {m}/{8}$.

The following quantity, $\outdeg(u)$, was defined in~\cite{ERS} for (standard) graphs. We extend their definition to partially erased graphs.

\begin{definition}\label{def:outdegree}
For a vertex $u$ in a partially erased graph $G$, let $N(u)$ denote the set of (nonerased) neighbors present in $\adj{u}$. Let
$\outdeg(u) = |\{v \in N(u) \mid u \prec v\}|$
denote the number of nonerased neighbors of $u$ that are higher than $u$ w.r.t. the ordering on vertices (as in Definition~\ref{def:vertex-order}).
\end{definition}

Roughly, $\outdeg(u)$ denotes the number of nonerased neighbors of $u$ with the degree higher than that of $u$. The following fact is based on an observation by~\cite{ERS}.

\begin{fact}\label{fact:outdeg}
For a partially erased graph $G$ over a vertex set $V$, the sum
$ \sum_{u \in V} \outdeg(u) \leq m$.
The inequality can be replaced with equality when $G$ has no erasures.
\end{fact}

\noindent The fact holds because each nonerased and half-erased edge in $G$ is counted exactly once and at most once, respectively, in the sum $\sum_{u \in V} \outdeg(u)$.

Let $u_1, u_2, \ldots, u_{|\cH|}$ be a labeling of the the high degree vertices such that $u_1 \prec u_2 \prec \ldots \prec u_{|\cH|}$.
For each $j \in [|\cH|]$, observe that $\outdeg(u_j) \leq |\cH| - j$,
as $\outdeg(u_j)$ is at most the number of vertices that are higher than $u_j$ in the ordering.
Hence,
\begin{equation}
\sum\limits_{u \in \cH} \outdeg(u)
\leq \sum\limits_{j = 1}^{|\cH|} (|\cH| - j)
= \sum\limits_{k = 0}^{|\cH|-1} k < \frac{|\cH|^2}{2} < \frac{\eps m}{2}, \label{eq:high-deg-bound}
\end{equation}
where the last inequality follows from \eqref{eq:h-bound}.

Let $\dbot(u)$ denote the number of erased entries in $\adj{u}$.
The expectation 
\begin{align}
\E[\chi_1] = \frac{1}{n} \sum\limits_{u \in V \setminus \cH} \frac{\outdeg(u) + \dbot(u)}{\deg(u)} \cdot \deg(u) = \frac{1}{n} \sum\limits_{u \in V \setminus \cH} (\outdeg(u) + \dbot(u)) \label{eq:ex-chi-1}
\end{align}
since the degree of the sampled vertex $u$ is assigned to $\chi_1$ if and only if
\begin{enumerate}
\item $\deg(u) \leq 4 \sqrt{{n \destin}/{\eps}}$, i.e., $u \in V \setminus \cH$; and
\item the queried entry from $\adj{u}$ is either a vertex $v \succ u$ or $\bot$.
\end{enumerate}

We now bound the quantity on the right hand side of~\eqref{eq:ex-chi-1} from below and above.
Let $G'$ be an arbitrary completion of $G$, and let $\outdeg_{G'}(\cdot)$ denote the quantity defined in Definition~\ref{def:outdegree} with respect to $G'$ (instead of $G$).
For each $u \in V$, observe that $\outdeg(u) + \dbot(u) \geq \outdeg_{G'} (u)$.
Also note that the upper bound in \eqref{eq:high-deg-bound} still holds if we replace $\outdeg(\cdot)$ with $\outdeg_{G'}(\cdot)$.
Hence, from~\eqref{eq:ex-chi-1},
\begin{equation} \label{eq:chi-lb}
\E[\chi_1]
\geq \frac{1}{n} \sum\limits_{u \in V \setminus \cH} \outdeg_{G'}(u)
= \frac{1}{n} \left(m - \sum\limits_{u \in \cH} \outdeg_{G'}(u)\right)
> \left(1 - \frac{\eps}{2} \right) \frac{m}{n}.
\end{equation}
On the other hand, from~\eqref{eq:ex-chi-1},
\begin{equation} \label{eq:chi-ub-1}
\E[\chi_1] 
\leq \frac{1}{n}\sum\limits_{u \in V} (\outdeg(u) + \dbot(u))
\leq (1 + 2 \alpha) \frac{m}{n},
\end{equation}
where the last inequality uses Fact~\ref{fact:outdeg} and $\sum_{u \in V} \dbot(u) \leq 2\alpha m$.
Since $\outdeg(u) + \dbot(u) \leq \deg(u)$ for all $u \in V$, from \eqref{eq:ex-chi-1},
\begin{equation} \label{eq:chi-ub-2}
\E[\chi_1] \leq \frac{1}{n} \sum\limits_{u \in V} \deg(u) = \frac{2m}{n}.
\end{equation}
This completes the proof of Claim~\ref{clm:chi-i} because,
using \eqref{eq:chi-lb},\eqref{eq:chi-ub-1} and \eqref{eq:chi-ub-2}, we get
\[\left(1 - \frac{\eps}{2}\right) \cdot \frac{m}{n}
< \E[\chi_1]
\leq  \left(1 + 2 \min\left(\alpha, \frac{1}{2}\right) \right) \cdot \frac{m}{n}. \qedhere\]
\end{proof}

Let random variable $\chi = \frac{1}{s} \sum\nolimits_{i=1}^s \chi_i$ denote the mean of $\chi_i$'s calculated in Step~\ref{step:deg-output} of Algorithm~\ref{alg:new-average-degree-oracle}. Since all $\chi_i$'s are independent and identically distributed, $\E[\chi] = \E[\chi_1]$.
Furthermore, the output $\dest$ of the algorithm is $2\chi$ and hence,
$\E[\dest] = 2\E[\chi]$.
By Claim~\ref{clm:chi-i}, if $\destin \geq \davg/8$ then
$ \E[\dest] \leq 2\davg.$
By Markov's inequality,
$\Pr[\dest > 8\davg] \leq \Pr[\dest > 4\E[\dest]] \leq \frac{1}{4}$.
This completes the proof of part~\ref{enum:bad-m-est} of Lemma~\ref{lem:avg-deg-guarantee}.

Now consider the case when $\frac{\davg}{8} \leq \destin \leq 8\davg$.
Observe that $0 \le \chi_i \leq 4 \sqrt{{n\destin}/{\eps}}$ for all $i \in [s]$ by Step~\ref{step:deg-if}. 
Hence, by an application of the Hoeffding bound,
\begin{align*}
\Pr\left[ |\chi - \E[\chi]| \geq \frac{\eps}{2} \cdot \E[\chi] \right]
\leq 2 \exp\left(- \frac{\eps^2/4}{2 + \eps/2} \cdot \frac{s \E[\chi]}{4} \sqrt{\frac{\eps}{n\destin}} \right) < \delta,
\end{align*}
where we used $\eps < 1/2$ and $\destin \le 8\davg$ in the simplification. 
Hence, with probability at least $1-\delta$,
\[
\left(1 - \frac{\eps}{2}\right) \cdot \E[\chi_1] < \chi < \left(1 + \frac{\eps}{2}\right) \cdot \E[\chi_1].
\]
Since $\dest = 2 \chi$, by Claim~\ref{clm:chi-i}, we get that with probability at least $1-\delta$,
\begin{align*}
\left( 1 - \frac{\eps}{2} \right)\left( 1 - \frac{\eps}{2} \right)\cdot \davg & < \dest < \left( 1 + \frac{\eps}{2} \right)\left(1 + 2 \min\left(\alpha, \frac{1}{2}\right) \right) \cdot \davg,
\end{align*}
proving part~\ref{enum:good-m-est} of Lemma~\ref{lem:avg-deg-guarantee}.
\end{proof}

\begin{algorithm}
	\caption{Erasure-Resilient Algorithm for Estimating the Average Degree} \label{alg:new-average-degree-estimation}

	\SetKwInOut{Input}{input}\SetKwInOut{Output}{output}
	\SetKwFor{RepeatTimes}{repeat}{times}{end}
	\Input{Parameter $\eps \in (0,1/2)$; query access to a partially erased graph $G$ on $n$ nodes}
	\DontPrintSemicolon
	\BlankLine

		\nl Set $t \gets \ceil{12 \ln(4 \log n)}$.\;
		\nl\label{step:for-avg-degree}\For{$i = 0$ to $\ceil{\log n}$}{
		\nl Set $\destin_i \gets {n}/{2^i}$.\;
		\nl\label{step:oracle-runs}\RepeatTimes{$t$} {
		\nl Run Algorithm~\ref{alg:new-average-degree-oracle} on inputs $\eps$ and $\destin_i$ with $\delta = 1/4$.}
		\nl\label{step:median-dest} Let $\dest_i$ be the median of the answers returned by Algorithm~\ref{alg:new-average-degree-oracle} in all the runs.\;
		\nl\label{step:return-dest}\lIf{$\dest_i > \destin_i$}{\Return $\dest_i$\,.}
		}
		\nl\label{step:return-1} \Return $1$.
\end{algorithm}

\begin{proof}[Proof of Theorem~\ref{thm:degree-ub}]
Our algorithm (Algorithm~\ref{alg:new-average-degree-estimation}) uses Algorithm~\ref{alg:new-average-degree-oracle} as a subroutine.
It runs with values of initial estimates $\destin$ set in powers of 2, stopping and returning the current estimate once it exceeds the initial estimate for this iteration.

Let $\ell \in \{0,1, \ldots, \ceil{\log n}\}$ be the iteration in which the algorithm returns the estimate in Step~\ref{step:return-dest}. If the algorithm returns the estimate in Step~\ref{step:return-1} then we let $\ell$ be $\ceil{\log n} + 1$.
Consider an iteration $i \in \{0,1, \ldots, \ell\}$ of the algorithm.
Call iteration $i$ \emph{good} if $\dest_i$ satisfies the guarantees of Lemma~\ref{lem:avg-deg-guarantee} and \emph{bad} otherwise.
The probability that iteration $i$ is bad is equal to the probability that at least $t/2$ runs of Step~\ref{step:oracle-runs} fail to satisfy the guarantees of Lemma~\ref{lem:avg-deg-guarantee}. By Chernoff bound, this probability is at most $1/(4 \log n)$. Hence, by the union bound, the probability that there exists a bad iteration in the execution of the algorithm is at most $\frac{\ell+1}{4 \log n} \leq \frac{\ceil{\log n} + 1}{4 \log n}$ which is at most $1/3$ whenever $n \geq 39$.
In the rest of the proof, we condition on the event that all iterations are good.

\begin{claim}\label{clm:destin-lb}
If all iterations are good then $\destin_i \geq \davg/8$ for all $i \in \{0, 1, \ldots, \ell\}$.
\end{claim}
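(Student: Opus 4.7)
The plan is to prove the claim by induction on $i$, using part~\ref{enum:good-m-est} of Lemma~\ref{lem:avg-deg-guarantee} as the main lever: when the crude estimate $\destin$ given to Algorithm~\ref{alg:new-average-degree-oracle} lies in $[\davg/8, 8\davg]$ and the call succeeds, its output satisfies $\dest > (1-\eps)\davg$, which exceeds $\davg/2$ since $\eps < 1/2$. Intuitively, this is what prevents $\destin_i$ from ever slipping below $\davg/8$: once $\destin_{i-1}$ enters the ``good'' range, the freshly computed $\dest_{i-1}$ will be large enough to trigger the stopping rule in Step~\ref{step:return-dest}, so the algorithm halts before $\destin$ is halved again. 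The base case $i = 0$ is immediate, since $\destin_0 = n$ and $\davg \leq n - 1$ for any simple graph on $n$ vertices, hence $\destin_0 \geq \davg > \davg/8$.

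For the inductive step, I would assume $\destin_{i-1} \geq \davg/8$ for some $i \in \{1, 2, \ldots, \ell\}$. Because $i \leq \ell$, the algorithm did not return at iteration $i-1$, so $\dest_{i-1} \leq \destin_{i-1}$. I would then split into two cases. If $\destin_{i-1} > 8\davg$, then $\destin_i = \destin_{i-1}/2 > 4\davg \geq \davg/8$ immediately. Otherwise $\davg/8 \leq \destin_{i-1} \leq 8\davg$, and since iteration $i-1$ is assumed to be good, part~\ref{enum:good-m-est} of Lemma~\ref{lem:avg-deg-guarantee} applies and gives $\dest_{i-1} > (1-\eps)\davg > \davg/2$. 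Combined with $\dest_{i-1} \leq \destin_{i-1}$, this yields $\destin_{i-1} > \davg/2$, and therefore $\destin_i = \destin_{i-1}/2 > \davg/4 \geq \davg/8$, closing the induction.

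The only subtlety worth flagging is the two-case split in the inductive step: one cannot invoke Lemma~\ref{lem:avg-deg-guarantee}, part~\ref{enum:good-m-est}, unless $\destin_{i-1}$ already lies in $[\davg/8, 8\davg]$, so the case $\destin_{i-1} > 8\davg$ must be handled separately. That case is trivial because halving such a large quantity still leaves it well above $\davg/8$, so no deeper obstacle arises and the rest of the argument is mechanical.
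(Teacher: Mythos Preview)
Your proof is correct and rests on the same key ingredient as the paper, namely part~\ref{enum:good-m-est} of Lemma~\ref{lem:avg-deg-guarantee} combined with the stopping rule in Step~\ref{step:return-dest}. The paper packages the argument as a contradiction rather than a forward induction: it observes that since the $\destin_i$ halve it suffices to bound $\destin_\ell$, supposes $\destin_\ell < \davg/8$, locates the single iteration $k < \ell$ at which $\davg/4 \leq \destin_k < \davg/2$, and then derives the same inequality $\dest_k > (1-\eps)\davg > \davg/2 > \destin_k$ to force a premature halt. Your two-case split (according to whether $\destin_{i-1}$ exceeds $8\davg$) is the extra bookkeeping needed to carry the invariant forward inductively; the paper sidesteps that split by jumping straight to the crossing iteration. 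The mathematical content is equivalent.
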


\begin{proof}
Since $\destin_{i-1} = 2 \destin_i$ for all $i \in [\ell]$, it suffices to prove that $\destin_\ell \geq \davg/8$. Suppose for the sake of contradiction that $\destin_\ell < \davg/8$. Then, for some iteration $k < \ell$, the estimate $\destin_k$ satisfied $\davg/4 \leq \destin_k < \davg/2$. Since iteration $k$ was good, part~\ref{enum:good-m-est} of Lemma~\ref{lem:avg-deg-guarantee} implies that $\dest_k \geq (1 - \eps)\davg > \davg/2$. Hence, $\dest_k > \destin_k$. Then Step~\ref{step:return-dest} in iteration $k$ would have returned an output and terminated the algorithm, contradicting the fact that the algorithm ran for $\ell$ iterations. Hence, $\destin_\ell \geq \davg/8$.
\end{proof}

By Step~\ref{step:return-dest}, $\destin_\ell < \dest_\ell$. By Claim~\ref{clm:destin-lb} and part~\ref{enum:bad-m-est} of Lemma~\ref{lem:avg-deg-guarantee}, the output satisfies $\dest_\ell \leq 8\davg$. Hence, $\destin_\ell \leq 8\davg$. Combining this with Claim~\ref{clm:destin-lb}, by part~\ref{enum:good-m-est} of Lemma~\ref{lem:avg-deg-guarantee}, the output of the algorithm satisfies $(1-\eps) \davg < \dest < (1+\eps+2 \min(\alpha, \tfrac{1}{2})) \davg$.

The running time of each run of Algorithm~\ref{alg:new-average-degree-oracle} in Step~\ref{step:oracle-runs} of iteration $i$ is $O\left(\frac{2^{i/2}}{\eps^{2.5}}\right)$. Furthermore, when all iterations are good, we have $n/2^\ell \geq \davg/8$ which implies that $\ell \leq \log(8n/\davg)$.
Hence, the running time of the algorithm is
\begin{align*}
O\left(\frac{\log\log n}{\eps^{2.5}} \right) \cdot \sum\limits_{i = 0}^{\ell} 2^{i/2} = O\left(\sqrt{{n}/{\davg}} \cdot \frac{\log\log n}{\eps^{2.5}} \right)
\end{align*}
when Algorithm~\ref{alg:new-average-degree-estimation} outputs the correct estimate. When it fails to output the correct estimate, the worst-case query complexity is $O\left(\sqrt{n} \cdot \frac{\log\log n}{\eps^{2.5}} \right)$.
\end{proof}

\subsection{A Lower Bound for Estimating the Average Degree}\label{sec:lb-av-degree}
In this section, we prove Theorem~\ref{thm:degree-lb}.
\begin{figure}[t]
	\centering
	\includegraphics[scale=0.9]{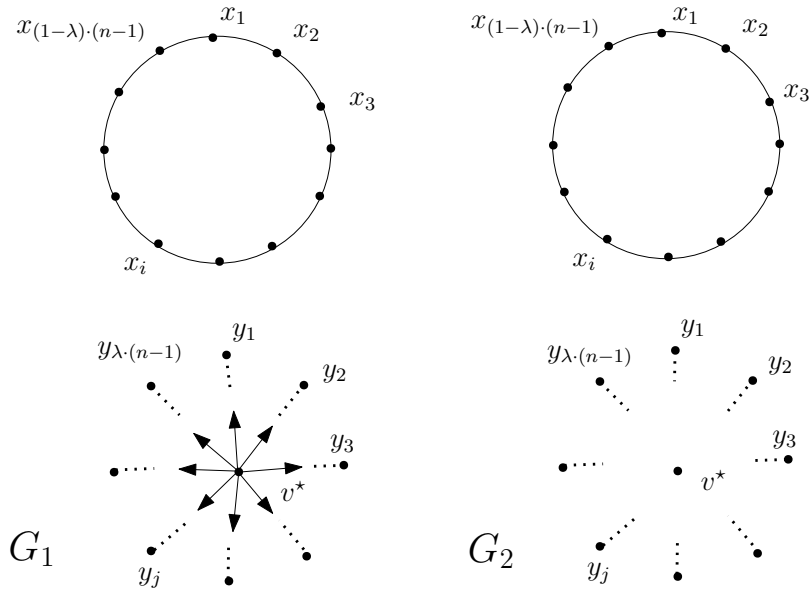}
	\caption{The partially erased graphs $G_1$ and $G_2$ described in the proof of Theorem~\ref{thm:degree-lb}. The dotted lines represent erased entries in the adjacency lists of corresponding vertices. The lines with arrows indicate that the entry corresponds to the vertex to which the arrow points to.
	The 
	circles represent the $(1-\lambda)(n-1)$-cycles.}

	\label{fig:average-degree-LB}
\end{figure}
\begin{proof}[Proof of Theorem~\ref{thm:degree-lb}]
Fix $\lambda = \frac{2\alpha}{1 + \alpha}$. Note that $\lambda \in (0, 1]$ since $\alpha \in (0,1]$.
Consider any integer $n$ such that $\lambda (n-1)$ is an even integer. Since $\alpha$ is rational, there are infinitely many such $n$.
We define two $n$-node graphs, $G_1$ and $G_2$ (see Figure~\ref{fig:average-degree-LB}).
 Both graphs contain a cycle consisting of $(1 - \lambda)(n-1)$ vertices. Of the remaining $\lambda (n - 1) + 1$ vertices, both graphs have $\lambda (n-1)$ vertices of degree $1$, with the only entry in the adjacency list of each such vertex erased.
The last vertex, called $v^\star$, is where $G_1$ and $G_2$ differ.
In $G_1$, we have that $\adj{v^\star}$ consists of
the labels of the $\lambda (n-1)$ degree-$1$ vertices.
In contrast, in $G_2$, the vertex $v^\star$ is isolated.
	
The graph $G_1$ can only be completed to a graph consisting of two components: a cycle of length $(1-\lambda)(n-1)$ and a star consisting of $\lambda (n-1)$ edges. The graph $G_2$ can only be completed to a graph consisting of a cycle of length $(1 - \lambda)(n-1)$, one isolated vertex, and a matching of size $\lambda (n-1)/2$.
Hence, the total lengths of the adjacency lists of $G_1$ and $G_2$ are $2(n-1)$ and $(2-\lambda)(n-1)$, respectively. The number of entries erased in both graphs is $\lambda (n-1)$. So, the fraction of erased entries in the adjacency lists of $G_1$ and $G_2$ are $\frac{\lambda}{2}$ and $\frac{\lambda}{2-\lambda}$, respectively. Hence, both $G_1$ and $G_2$ are $\alpha$-erased, as $\frac{\lambda}{2-\lambda} = \alpha$.
The average degree of $G_1$ and $G_2$ are $\frac{2(n-1)}{n}$ and $\frac{(2 - \lambda)(n-1)}{n}$, respectively. The ratio of the average degrees is $\frac{2}{2-\lambda} = 1 + \alpha$.
	
The rest of the proof is similar to that of Theorem~\ref{thm:connectedness-lb}. We define two distributions $\calD_1$ and $\calD_2$ as the uniform distributions over the set of all graphs isomorphic to $G_1$ and $G_2$, respectively.
	To differentiate between the two distributions, any tester must necessarily query $v^\star$ which requires $\Omega(n)$ queries.
	The ratio of the average degrees of the two distributions is $1 + \alpha$. Hence, to approximate the average degree within a factor of $(1 + \gamma),$ where $\gamma < \alpha$, any tester must query $\Omega(n)$ vertices.
\end{proof}

\section{Conclusion and Open Questions}\label{sec:open}

In this work, we initiate the study of sublinear-time algorithms for problems on partially erased graphs.
Our investigation opens up a plethora of research directions and possibilities for future work.
In what follows, we discuss several specific open questions arising from our work.

\paragraph{Phase Transitions in the Complexity of Erasure-Resilient Connectedness Testing.}
As shown in Section~\ref{sec:er-connectedness}, there is a phase transition in the complexity of connectedness testing at $\alpha = \eps$ from time independent of the size of the graph to $\Omega(n)$. Our upper bound on the complexity of this problem exhibits another, less drastic phase transition at $\alpha = \eps/2$, when the asymptotic dependence of the running time on $\eps$ and $ \alpha$  changes. We conjecture that this second phase transition is inherent (and not an artifact of our techniques).
It would be interesting to investigate whether connectedness testing when $\alpha \in [\eps/2,\eps)$ is fundamentally different from the same problem when $\alpha \in [0,\eps/2)$.

\paragraph{Erasure-Resilient Testing of Monotone Properties in the Bounded-Degree Model.}
A property of a graph is \emph{monotone}
if it is preserved under deletion
of edges and vertices. That is, if $G$ satisfies a monotone property then so does every subgraph of $G$. Many important graph properties, including bipartiteness, 3-colorability, and triangle-freeness, are monotone.

In the bounded-degree property testing model~\cite{GR02}, an $n$-node graph $G$
with the degree bound $D$ is represented as a concatenation of $n$ adjacency lists, each of length $D$.
For a vertex $v \in G$ and an index $i \in [D]$, a neighbor query $(v,i)$ returns a valid vertex in the graph if $i \leq \deg(v)$ and a special symbol, say $\aspace$\,, if $i > \deg(v)$.
The graph $G$
is $\eps$-far from satisfying a property $\cP$ if at least $\eps n D$ entries in the adjacency lists of $G$ need to be modified to make it satisfy $\cP$. 

Bounded-degree property testing can be generalized in a natural way to account for erased entries in adjacency lists.
A bounded-degree graph is $\alpha$-erased if at most $\alpha n D $ entries of its adjacency lists are erased.
We observe that a tester for a monotone property of bounded-degree graphs can be made erasure-resilient via a simple transformation.

\begin{observation}\label{obs:down-mono}
Let $\cP$ be a monotone property of graphs. Suppose there exists an $\eps$-tester for $\cP$ in the bounded-degree model that makes $q(\eps, n, D)$ queries. 
Then there exists an $\alpha$-erasure-resilient $\eps$-tester for $\cP$ in the bounded-degree model that 
makes at most $D^2 \cdot q(\eps-2\alpha, n, D)$ queries and works for all $\alpha \in(0,\eps/2)$.
\end{observation}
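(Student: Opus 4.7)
The natural approach is to reduce $\alpha$-erasure-resilient testing of $\cP$ to standard (non-erasure-resilient) testing by discarding every edge whose presence is not fully certified. Concretely, given the $\alpha$-erased input $G$, I would define the \emph{nonerased subgraph} $G^*$ whose edges are exactly the nonerased edges of $G$ (pairs $\{u,v\}$ with $u\in\adj{v}$ and $v\in\adj{u}$ in $G$), and represent $G^*$ in the standard bounded-degree format with the valid neighbors of each vertex packed at the beginning of its adjacency list.

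Soundness follows immediately from monotonicity: if $G$ satisfies $\cP$, then some completion $G'$ satisfies $\cP$; every edge of $G^*$ is an edge of $G'$, so $G^*$ is a subgraph of $G'$ and therefore satisfies $\cP$ as well. For the other direction, I would show that if $G$ is $\eps$-far from $\cP$, then $G^*$ is $(\eps-2\alpha)$-far. Fix any completion $G'$ of $G$ and compare adjacency-list representations: the only positions where $G^*$ and $G'$ can disagree are positions that are either erased in $G$, or that contain the nonerased half of a half-erased edge. Since each half-erased edge contains exactly one erased entry, the total number of disagreements is at most twice the number of erased entries, hence at most $2\alpha nD$. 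A triangle inequality on the adjacency-list distance then converts the $\eps nD$ lower bound for any completion into an $(\eps-2\alpha)nD$ lower bound for $G^*$, which is positive because $\alpha<\eps/2$.

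Given these two ingredients, the erasure-resilient tester simply runs the assumed standard $(\eps-2\alpha)$-tester for $\cP$ on $G^*$, simulating each of its $q(\eps-2\alpha, n, D)$ queries with at most $D^2$ queries to $G$. To simulate a neighbor query $(v,i)$ on $G^*$ in the compact bounded-degree format, I would scan $\adj{v}$ in $G$ position by position; whenever a position contains a nonerased vertex $u$, I would read all of $\adj{u}$ to verify that $v$ appears nonerased there, and treat the edge as present in $G^*$ if and only if the verification succeeds; the output is the $\ord{i}$ certified neighbor found, or $\aspace$ if fewer than $i$ such neighbors exist. There are at most $D$ positions to scan and each verification costs at most $D$ queries, yielding the stated $D^2$ overhead per simulated query.

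The main obstacle is the disagreement count in the second paragraph: each half-erased edge causes two positions to disagree between $G^*$ and $G'$ (one in each endpoint's list) yet is accounted for by a single erased entry of $G$. Getting this constant right is what produces the factor of $2$ in the proximity parameter $\eps-2\alpha$ and also explains why the reduction becomes vacuous as soon as $\alpha$ reaches $\eps/2$.
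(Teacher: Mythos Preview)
Your proposal is correct and follows essentially the same route as the paper: define the nonerased subgraph $G^\star$, use monotonicity for the ``yes'' case, bound the adjacency-list distance between $G^\star$ and any completion by $2\alpha nD$ for the ``no'' case, and simulate each query to $G^\star$ with $D^2$ queries to $G$ by scanning $\adj{v}$ and verifying each candidate neighbor in its own list. The only minor point to tighten is that once you pack the neighbors of $G^\star$ to the front of each list, a literal position-by-position comparison with $G'$ can differ in more places than the entries you identify; the bound $2\alpha nD$ is nonetheless correct when distance is measured (as in the paper) by the minimum number of entry modifications, since removing each of the at most $\alpha nD$ fully- or half-erased edges costs exactly two entries.
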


\begin{proof}
Fix an $\alpha$-erased bounded-degree graph $G$ on the vertex set $V$. Let $G^\star=(V,E^\star)$ be the graph consisting of only the nonerased edges of $G$ (see Definition~\ref{def:half-erased}).
We construct an oracle $\cO$ that simulates access to $G^\star$ by querying $G$.
Let $\adj{\cdot}$ and $\adjp{\cdot}$ denote the adjacency lists of $G$ and $G^\star$, respectively.
On a degree or a neighbor query for a vertex $v\in V,$ the oracle $\cO$ internally constructs
$\adjp{v}$ from $\adj{v}$ as follows:
\begin{enumerate}
\item Initialize $\adjp{v}$ to an empty list.
\item For each vertex $u \in \adj{v}$ (i.e., $u \notin \{\bot, \aspace \}$), concatenate $u$ to $\adjp{v}$ iff $v \in \adj{u}$.
\item Pad $\adjp{v}$ with special characters $\aspace$ so that its length is $D$.
\end{enumerate}
The oracle $\cO$ then answers the query with respect to the nonerased adjacency list $\adjp{v}$. As $\adj{v}$ has length at most $D$, and checking if $v \in \adj{u}$ for each $u \in \adj{v}$ takes at most $D$ queries, the oracle makes at most $D^2$ queries to $G$ to answer each query about $G^\star$.

Observe that an edge $\{u,v\} \in G^\star$ iff $u \in \adj{v}$ and $v \in \adj{u}$.
If $G$ satisfies $\cP$ then so does $G^\star$, as $G^\star$ is a subgraph of a completion of $G$ that satisfies the monotone property $\cP$.
Suppose that $G$ is $\eps$-far from satisfying $\cP$. Fix an arbitrary completion $G'$ of $G$. As $G$ is $\alpha$-erased, at most $\alpha n D$ edges of $G'$ are (fully or partially) erased in $G$. As $G^\star$ is a subgraph of $G'$ consisting of only the nonerased edges, the adjacency lists of $G$ and $G^\star$ differ on at most $2\alpha n D$ entries. As $G'$ is $\eps$-far from $\cP$, the graph $G^\star$ is $(\eps-2\alpha)$-far from $\cP$.

Let $\cT$ be an $\eps$-tester for $\cP$ whose query complexity is $q(\eps, n, D)$. Then, for $\alpha < \eps/2$, an $\alpha$-erasure-resilient $\eps$-tester $\cT'$ for $\cP$ can be obtained by simulating $\cT$ with the proximity parameter $\eps-2\alpha$ on $G^\star$ via the oracle $\cO$ and returning the result of the simulation. The complexity of $\cT'$ is $D^2 \cdot q(\eps-2\alpha, n, D)$ as the oracle $\cO$ makes at most $D^2$ queries to $G$ for each of the $q(\eps-2\alpha, n, D)$ queries it receives.
\end{proof}

This transformation is not efficient for general graphs, as the maximum degree of a graph can be $n-1$.
It is interesting to understand how much erasure-resilience affects query complexity of testing monotone properties
in our erasure-resilient model for general graphs.

\paragraph{Erasure-Resilient vs.\ Tolerant Testing of Graphs.}
For $0 \leq \eps_1 < \eps_2 < 1$, an \emph{$(\eps_1, \eps_2)$-tolerant tester} for a property $\cP$  must accept, with high probability, if the input is $\eps_1$-close\footnote{An object is $\eps_1$-close to a property $\cP$ if it is not $\eps_1$-far from $\cP$.}
to $\cP$ and reject, with high probability, if the input is $\eps_2$-far from $\cP$~\cite{PRR06}.
Dixit et al.~\cite{DRTV18} observed that, for properties of functions, erasure-resilient testing is no harder than tolerant testing.
Specifically, a tolerant tester for a property of functions can be easily converted to an erasure-resilient tester with the same complexity. The new tester can run the tolerant tester, filling in the queried erasures with arbitrary values.
However, this argument fails in the case of testing properties of graphs represented as adjacency lists, since the erased entries have to be filled in so that the resulting completion is a valid graph.
%
In the bounded-degree model,
we can use a $(2\alpha, \eps-2\alpha)$-tolerant tester for a property $\cP$ to obtain an
 $\alpha$-erasure-resilient $\eps$-tester for $\cP$ with an overhead $O(D^2)$ in query complexity via a transformation similar to the one explained in our discussion of monotone properties.
It is an important open question to understand the relationship between erasure-resilient and tolerant testing
in the general graph model.

\paragraph{Symmetric vs.\ Asymmetric Erasures.}
Our definition of partially erased graphs is general in the sense that erased entries may be \emph{asymmetric}: an edge $(u,v)$ can be erased in $\adj{u}$, but not in $\adj{v}$. A partially erased graph has only \emph{symmetric} erasures if it has no half-erased edges, 
that is, $u\in\adj{v}$ iff $v\in\adj{u}$ for any two nodes $u, v.$
It is an interesting direction to investigate which computational tasks are strictly easier in the model with symmetric erasures compared to the model with asymmetric erasures.

\bigskip

\paragraph{Acknowledgments.} We thank Talya Eden for useful discussions that led to simplification of analysis in Section~\ref{sec:alg-av-degree}.

\bibliographystyle{alpha}
\bibliography{references}

\newcommand{\etalchar}[1]{$^{#1}$}
\begin{thebibliography}{ABG{\etalchar{+}}18}

\bibitem[ABG{\etalchar{+}}18]{ABGP18}
Maryam Aliakbarpour, Amartya~Shankha Biswas, Themis Gouleakis, John Peebles,
  Ronitt Rubinfeld, and Anak Yodpinyanee.
\newblock Sublinear-time algorithms for counting star subgraphs via edge
  sampling.
\newblock {\em Algorithmica}, 80(2):668--697, 2018.

\bibitem[AKK19]{AKK19}
Sepehr Assadi, Michael Kapralov, and Sanjeev Khanna.
\newblock A simple sublinear-time algorithm for counting arbitrary subgraphs
  via edge sampling.
\newblock In {\em Proceedings of Innovations in Theoretical Computer Science
  (ITCS)}, pages 6:1--6:20, 2019.

\bibitem[BFLR20]{BFLR20}
Omri Ben{-}Eliezer, Eldar Fischer, Amit Levi, and Ron~D. Rothblum.
\newblock Hard properties with (very) short {PCPPs} and their applications.
\newblock In {\em Proceedings of Innovations in Theoretical Computer Science
  (ITCS)}, pages 9:1--9:27, 2020.

\bibitem[BKM14]{BKM14}
Petra Berenbrink, Bruce Krayenhoff, and Frederik Mallmann{-}Trenn.
\newblock Estimating the number of connected components in sublinear time.
\newblock {\em Inf. Process. Lett.}, 114(11):639--642, 2014.

\bibitem[BRY14]{BRY14}
Piotr Berman, Sofya Raskhodnikova, and Grigory Yaroslavtsev.
\newblock {$L_p$}-testing.
\newblock In {\em Proceedings of ACM Symposium on Theory of Computing (STOC)},
  pages 164--173, 2014.

\bibitem[CRT05]{CRT05}
Bernard Chazelle, Ronitt Rubinfeld, and Luca Trevisan.
\newblock Approximating the minimum spanning tree weight in sublinear time.
\newblock {\em {SIAM} J. Comput.}, 34(6):1370--1379, 2005.

\bibitem[DRTV18]{DRTV18}
Kashyap Dixit, Sofya Raskhodnikova, Abhradeep Thakurta, and Nithin~M. Varma.
\newblock Erasure-resilient property testing.
\newblock {\em {SIAM} J. Comput.}, 47(2):295--329, 2018.

\bibitem[ELRS17]{ELRS17}
Talya Eden, Amit Levi, Dana Ron, and C.~Seshadhri.
\newblock Approximately counting triangles in sublinear time.
\newblock {\em {SIAM} J. Comput.}, 46(5):1603--1646, 2017.

\bibitem[ER18]{ER18}
Talya Eden and Will Rosenbaum.
\newblock Lower bounds for approximating graph parameters via communication
  complexity.
\newblock In {\em Proceedings of Approximation, Randomization, and
  Combinatorial Optimization. Algorithms and Techniques (APPROX-RANDOM)}, pages
  11:1--11:18, 2018.

\bibitem[ERS17]{ERS17}
Talya Eden, Dana Ron, and C.~Seshadhri.
\newblock Sublinear time estimation of degree distribution moments: The
  degeneracy connection.
\newblock In {\em Proceedings of International Colloquium on Automata,
  Languages and Processing (ICALP)}, pages 7:1--7:13, 2017.

\bibitem[ERS18]{ERS18}
Talya Eden, Dana Ron, and C.~Seshadhri.
\newblock On approximating the number of $k$-cliques in sublinear time.
\newblock In {\em Proceedings of ACM Symposium on Theory of Computing (STOC)},
  pages 722--734, 2018.

\bibitem[ERS19]{ERS}
Talya Eden, Dana Ron, and C.~Seshadhri.
\newblock Extremely simple algorithm for estimating the number of edges.
\newblock Personal Communication, 2019.

\bibitem[Fei06]{Fei06}
Uriel Feige.
\newblock On sums of independent random variables with unbounded variance and
  estimating the average degree in a graph.
\newblock {\em {SIAM} J. Comput.}, 35(4):964--984, 2006.

\bibitem[GGR98]{GGR98}
Oded Goldreich, Shafi Goldwasser, and Dana Ron.
\newblock Property testing and its connection to learning and approximation.
\newblock {\em J. {ACM}}, 45(4):653--750, 1998.

\bibitem[Gol17]{GoldreichBook}
Oded Goldreich.
\newblock {\em Introduction to Property Testing}.
\newblock Cambridge University Press, 2017.

\bibitem[GR02]{GR02}
Oded Goldreich and Dana Ron.
\newblock Property testing in bounded degree graphs.
\newblock {\em Algorithmica}, 32(2):302--343, 2002.

\bibitem[GR08]{GR08}
Oded Goldreich and Dana Ron.
\newblock Approximating average parameters of graphs.
\newblock {\em Random Struct. Algorithms}, 32(4):473--493, 2008.

\bibitem[GRS11]{GRS11}
Mira Gonen, Dana Ron, and Yuval Shavitt.
\newblock Counting stars and other small subgraphs in sublinear-time.
\newblock {\em {SIAM} J. Discrete Math.}, 25(3):1365--1411, 2011.

\bibitem[KKR04]{KKR04}
Tali Kaufman, Michael Krivelevich, and Dana Ron.
\newblock Tight bounds for testing bipartiteness in general graphs.
\newblock {\em {SIAM} J. Comput.}, 33(6):1441--1483, 2004.

\bibitem[PR02]{PR02}
Michal Parnas and Dana Ron.
\newblock Testing the diameter of graphs.
\newblock {\em Random Struct. Algorithms}, 20(2):165--183, 2002.

\bibitem[PRR06]{PRR06}
Michal Parnas, Dana Ron, and Ronitt Rubinfeld.
\newblock Tolerant property testing and distance approximation.
\newblock {\em J. Comput. Syst. Sci.}, 72(6):1012--1042, 2006.

\bibitem[PRV20]{PRV}
Ramesh Krishnan~S. Pallavoor, Sofya Raskhodnikova, and Nithin Varma.
\newblock Improved bounds for $k$-connectedness testing.
\newblock Unpublished manuscript, 2020.

\bibitem[PRW20]{PRW20}
Ramesh Krishnan~S. Pallavoor, Sofya Raskhodnikova, and Erik Waingarten.
\newblock Approximating the distance to monotonicity of {Boolean} functions.
\newblock In {\em Proceedings of ACM-SIAM Symposium on Discrete Algorithms
  (SODA)}, pages 1995--2009, 2020.

\bibitem[RRV19]{RRV19}
Sofya Raskhodnikova, Noga Ron{-}Zewi, and Nithin~M. Varma.
\newblock Erasures vs. errors in local decoding and property testing.
\newblock In {\em Proceedings of Innovations in Theoretical Computer Science
  (ITCS)}, pages 63:1--63:21, 2019.

\bibitem[RS06]{RS06}
Sofya Raskhodnikova and Adam~D. Smith.
\newblock A note on adaptivity in testing properties of bounded degree graphs.
\newblock {\em Electronic Colloquium on Computational Complexity {(ECCC)}},
  13(089), 2006.

\end{thebibliography}

\begin{appendix}
\section{Connectedness Tester for Nonerased Graphs}\label{sec:nonerased-connectedness}

Our connectedness tester (Algorithm~\ref{alg:connectedness-work-investment}) improves on the connectedness tester of \cite{BRY14} in the standard property testing model in terms of the dependence on the proximity parameter.
Since this improvement is of independent interest, we state in this section, the special case of Algorithm~\ref{alg:connectedness-work-investment} for graphs with no erasures.
Algorithm~\ref{alg:optimal-connectedness} can be obtained by setting $\alpha=0$
in our erasure-resilient tester for connectedness for the case $b > \davg \log \avgwit$.

\begin{algorithm}
	\caption{The Connectedness Tester}\label{alg:optimal-connectedness}
	\SetKwInOut{Input}{input}\SetKwInOut{Output}{output}
	\SetKwFor{RepeatTimes}{repeat}{times}{end}
		\Input{The average degree $\davg$, parameter $\eps \in (0,{2}/{\davg})$; query access to a graph $G$}
		\DontPrintSemicolon
		\BlankLine
		
		\nl \label{step:setting-t}Set $t \gets \ceil{\log ({8}/{(\eps \davg)})}$.	\tcp*{$t = \ceil{\log(4b)}$ where $b = 2/(\eps\davg)$}	
		\nl \label{step:for-gen-conn}\For{$i \in [t] $}{
		\nl \label{step:repetition-gen-conn}\RepeatTimes{$s_i = \ceil{2^{t-i} \cdot \ln 6}$}{
		\nl Sample a vertex $v \in V$ uniformly at random and query its degree $\deg(v)$.\label{step:vertex-sample-gen-conn} \;
		\nl Run a BFS starting from $v$ until it encounters $2^{i-1} \cdot \deg(v) + 1$ edges.\label{step:bfs-gen-conn} \;
		\nl \lIf {Step~\ref{step:bfs-gen-conn} explored an entire connected component} {
		\label{step:reject-gen-conn}\textbf{reject}.}
		}
		}
		\nl \textbf{Accept}.\;
\end{algorithm}

\subsection{Connectedness Tester for Graphs with Unknown Average Degree}\label{sec:unknown-avg-degree}
We modify Algorithm~\ref{alg:optimal-connectedness} to work in the setting where the number of edges in the input graph (and, consequently, its average degree $\davg$) is not known to the algorithm. We present the modification in Algorithm~\ref{alg:optimal-connectedness-unknown-davg}.

\begin{algorithm}
	\caption{The Connectedness Tester for Graphs with unknown Average Degree}\label{alg:optimal-connectedness-unknown-davg}
	\SetKwInOut{Input}{input}\SetKwInOut{Output}{output}
	\SetKwFor{RepeatTimes}{repeat}{times}{end}
	\SetKwFor{Loop}{loop}{}{end}
		\Input{Parameter $\eps \in (0,1)$; query access to a graph $G$} 
		\setcounter{AlgoLine}{-1}
		\DontPrintSemicolon
		\BlankLine
		\nl Abort and \textbf{accept} if the number of neighbor queries exceeds $\frac{350}{\eps} \log \frac{16}{\eps}$. \;
		\nl Initialize $t \gets 0$.\;
		\nl \Loop{}{
		\nl Increment $t \gets t + 1$. \;
		\nl \For{$i \in [t]$}{
		\nl \RepeatTimes{$\ceil{2^{\max\{t-i-1, 0\}} \cdot \ln 6}$}{
		\nl Sample a vertex $v \in V$ uniformly at random and query its degree $\deg(v)$.\;
		\nl Run a BFS starting from $v$ until it encounters $2^{i-1} \cdot \deg(v) + 1$ edges.\label{step:bfs-nonerased-conn-1} \;
		\nl \lIf {Step~\ref{step:bfs-nonerased-conn-1} explored an entire connected component} {\textbf{reject}.\label{step:rej-nonerased-conn-1}}
		}
		}
		}
\end{algorithm}

\begin{theorem} \label{thm:connectedness-without-davg}
Algorithm~\ref{alg:optimal-connectedness-unknown-davg} is an $\eps$-tester for connectedness that has $O(\frac{1}{\eps} \log \frac{1}{\eps})$ query and time complexity and has 1-sided error.
\end{theorem}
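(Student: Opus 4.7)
The plan is to mirror the analysis of Algorithm~\ref{alg:connectedness-work-investment} (specialised to $\alpha=0$, i.e., Algorithm~\ref{alg:optimal-connectedness}) but pay extra care to two issues: (i) we do not know $\davg$ and hence do not know the correct outer cutoff $t^*=\lceil\log(8/(\eps\davg))\rceil$ in advance; (ii) we need a worst-case bound on the query complexity, since the unknown-$\davg$ version might otherwise run forever. The query bound and one-sidedness are both immediate from the abort rule: the algorithm makes at most $\frac{350}{\eps}\log\frac{16}{\eps}=O(\frac1\eps\log\frac1\eps)$ neighbor queries by construction (degree queries contribute a constant per sampled vertex and are absorbed into the same bound), and it only ever rejects when a BFS from a sampled vertex exhausts its connected component before using its edge budget -- an event that cannot happen when $G$ is connected, so connected inputs are accepted with probability $1$.

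For soundness, suppose $G$ is $\eps$-far from connected. Let $t^*=\lceil\log(8/(\eps\davg))\rceil$; note $2^{t^*}=\Theta(b)$ for $b=2/(\eps\davg)$. I will argue that the cumulative inner sampling performed over all outer iterations $t'=1,2,\dots,t^*$ matches, up to small constants, the sampling schedule of Algorithm~\ref{alg:optimal-connectedness}. Indeed, for each fixed inner index $i$, the total number of samples taken at level $i$ over outer iterations $t'\ge i$ is
\[
\sum_{t'=i}^{t^*}\bigl\lceil 2^{\max\{t'-i-1,0\}}\ln 6\bigr\rceil \;=\; \Theta\bigl(2^{t^*-i}\ln 6\bigr),
\]
which is (up to a constant factor) the $s_i$ used in Algorithm~\ref{alg:optimal-connectedness}. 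Consequently, the work-investment argument from the proof of Lemma~\ref{lem:connectedness-work-investment-correctness} (using the quality function of Definition~\ref{def:quality2} with $\alpha=0$ and Lemma~\ref{lem:work-investment}) applies verbatim and shows that, conditioned on the algorithm not aborting before the end of outer iteration $t^*$, it rejects with probability at least $5/6$.

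It remains to bound the probability of an early abort. The expected neighbor-query cost of outer iteration $t'$ is
\[
\sum_{i=1}^{t'}\bigl\lceil 2^{\max\{t'-i-1,0\}}\ln 6\bigr\rceil \cdot 2^{i-1}\E[\deg(v)]\;=\;O\bigl(t'\cdot 2^{t'-1}\cdot\davg\bigr),
\]
so the total expected cost through outer iteration $t^*$ is $O(t^*\cdot 2^{t^*}\cdot\davg)=O(\tfrac1\eps\log\tfrac1{\eps\davg})=O(\tfrac1\eps\log\tfrac1\eps)$, where the last step uses $\davg\ge 1$ for any simple graph with at least one edge (and the trivial case $\davg<1$ can be handled separately because $n=O(1/\eps)$ then). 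Choosing the multiplicative constant $350$ in the abort threshold to be at least $6$ times the constant appearing in this expectation, Markov's inequality forces the abort to trigger prematurely with probability at most $1/6$. A union bound with the previous paragraph yields rejection probability at least $5/6-1/6=2/3$, as required. The main obstacle is this constant-tuning step: pinning down the hidden constants in the expected-cost computation tightly enough that the stated $350$ suffices, and handling the corner cases (e.g., very small $n$ where the whole graph is examined within the budget, and graphs with $\davg$ much smaller than $1$).
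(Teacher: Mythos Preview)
Your proposal is correct and follows essentially the same approach as the paper: both argue that (i) the abort threshold gives the query bound and one-sidedness immediately, (ii) by the time the outer loop reaches $t^*=\lceil\log(8/(\eps\davg))\rceil$ the work-investment analysis of Algorithm~\ref{alg:optimal-connectedness} guarantees rejection with probability $\ge 5/6$, (iii) Markov's inequality on the expected cost through iteration $t^*$ bounds the early-abort probability by $1/6$, and (iv) a union bound finishes. Your cumulative-sampling observation (summing the level-$i$ repetitions over $t'=i,\dots,t^*$ to recover $\Theta(2^{t^*-i})$) is actually a bit more careful than the paper's informal claim that iteration $t^*$ alone is ``identical'' to Algorithm~\ref{alg:optimal-connectedness}, since the per-iteration repetition counts differ by a factor of~$2$; but this is a refinement of the same idea, not a different route.
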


We exploit the fact that $\davg$ is only used in Step~\ref{step:setting-t} of Algorithm~\ref{alg:optimal-connectedness} to set the value of $t$.
When $\davg$ is unknown, we try different values of $t$, starting at $t = 1$ and increasing it in each iteration until the tester rejects, or the query budget is reached.
Importantly, when $t = \ceil{\log({8}/{(\eps\davg)})}$,
Algorithm~\ref{alg:optimal-connectedness-unknown-davg} is identical to Algorithm~\ref{alg:optimal-connectedness}.
Clearly, Algorithm~\ref{alg:optimal-connectedness-unknown-davg} never rejects connected graphs.
For graphs that are $\eps$-far from connected, if the algorithm rejects in Step~\ref{step:rej-nonerased-conn-1} 
before the value of $t$ reaches $\ceil{\log({8}/{(\eps\davg)})}$, we are done.
The expected cumulative number of queries made by the algorithm for $t$ to be at least $\ceil{\log({8}/{(\eps\davg)})} + 1$ is $O\big( \frac{1}{\eps} \log\frac{1}{\eps \davg}\big)$.
Hence, by Markov's inequality, with high constant probability, 
Algorithm~\ref{alg:optimal-connectedness-unknown-davg} does not exceed its query budget until $t$ is at least $\ceil{\log({8}/{(\eps\davg)})}$.
The rest of the correctness argument is identical to that of Algorithm~\ref{alg:optimal-connectedness}.
\end{appendix}

\end{document}